\definecolor{fgreen}{RGB}{204,223,181}
\definecolor{fblue}{RGB}{85,181,255}
\definecolor{dblue}{RGB}{85,113,192}
\definecolor{dgreen}{RGB}{132,171,80} 
\definecolor{coralpink}{rgb}{0.97, 0.51, 0.47}
\newtheorem{theorem}{Theorem}
\newtheorem{lemma}[theorem]{Lemma}
	\renewcommand{\qedsymbol}{\qedsymbol}}
\DeclareMathAlphabet\mathbfcal{OMS}{cmsy}{b}{n}
\definecolor{Gray}{gray}{0.9}
\definecolor{LightCyan}{rgb}{0.88,1,1}
\algnewcommand{\IIf}[1]{\State\algorithmicif\ #1\ \algorithmicthen}
\algnewcommand{\IELSEIF}[2]{\State\algorithmicelseif \ #2\ \algorithmicthen}
\algnewcommand{\IElse}[3]{\State\algorithmicelse\ #3\ }
\algnewcommand{\EndIIf}{\unskip\ \algorithmicend\ \algorithmicif}
\tikzstyle{Block} = [rectangle, rounded corners, minimum width=3cm, minimum height=1cm,text centered, draw=black, fill=red!30]
\definecolor{Burgundy}{RGB}{10,0,0}
\algnewcommand\INPUT{\item[\textbf{Input:}]}%
\algnewcommand\OUTPUT{\item[\textbf{Output:}]}%
\newcommand\numberthis{\addtocounter{equation}{1}\tag{\theequation}}
\begin{document}

\title{Optimizing Reported Age of Information with Short Error Correction and Detection Codes
\thanks{A shorter version of the paper has been submitted for review to IEEE Globecom workshops, 2023.}
}
\author{\IEEEauthorblockN{Sumanth S Raikar\IEEEauthorrefmark{1} and Rajshekhar  V Bhat\IEEEauthorrefmark{2}}
	
	\IEEEauthorblockA{Indian Institute of Technology Dharwad, Dharwad, Karnataka, India\\
		Email: 
		\IEEEauthorrefmark{1}211022005@iitdh.ac.in, 
		\IEEEauthorrefmark{2}rajshekhar.bhat@iitdh.ac.in  
}}

\maketitle
 
\begin{abstract}

Timely sampling and fresh information delivery are important in 6G communications. This is achieved by encoding samples into short packets/codewords for transmission, with potential decoding errors. We consider a broadcasting base station (BS) that samples information from multiple sources and transmits to respective destinations/users, using short-blocklength cyclic and deep learning (DL) based codes for error correction, and cyclic-redundancy-check (CRC) codes for error detection. We use a metric called reported age of information (AoI), abbreviated as RAoI, to measure the freshness of information, which increases from an initial value if the CRC reports a failure, else is reset. We minimize long-term average expected RAoI, subject to constraints on transmission power and distortion, for which we obtain age-agnostic randomized and age-aware drift-plus-penalty policies that decide which user to transmit to, with what message-word length and transmit power, and derive bounds on their performance. Simulations show that longer CRC codes lead to higher RAoI, but the RAoI achieved is closer to the true, genie-aided AoI. DL-based codes achieve lower RAoI.  Finally, we conclude that prior AoI optimization literature with finite blocklengths substantially underestimates AoI because they assume that all errors can be detected perfectly without using CRC.

\end{abstract}


\section{Introduction}
In upcoming sixth generation (6G)  semantic communications, fresh delivery of information packets is of significant importance  \cite{Book_Nikos,Book_Yates}. 
Accordingly, several metrics, including age of incorrect information (AoII) \cite{AoII}, age of incorrect estimates (AoIE)  \cite{Bhavya} and age of processed information (AoPI), 
  based on the concept of age of information (AoI), have been considered and optimized. For delivering information freshly, it may be useful to adopt codewords of short to ultra-short blocklengths, as the amount of time required to transmit them is low \cite{ISIT_2018_Oswaldo}. However, adopting short blocklength codewords results in non-negligible decoding errors, as they are unable to average out the effects of thermal noise, unlike  long codewords \cite{tail_letters}. 
Due to this, when short codewords are used, while the delay in transmitting a packet is small, the packet may not even be successfully delivered. 
For a codeword of given length, the error probability decreases if the message-word length is decreased or the transmit power is increased \cite{PPV}. However, decreasing the message-word length may result in distortion of information.
Moreover, existing work on AoI optimization literature does not consider the need for CRC to determine whether a transmission is successful or not. That is, they assume that the success or failure of a transmission is perfectly known without requiring any error detection mechanisms. 
 A widely used method to detect errors is to append cyclic redundancy check (CRC) code bits to the message bits, but they may not be able to detect all errors. 
  In practice, if CRC reports that a packet has passed the check, the packet is deemed successful. Otherwise, it is deemed a failure.
To account for this, we define a metric called reported AoI (RAoI), which is similar to AoI but evolves based on the CRC pass/fail report.  RAoI starts at zero, increases if CRC reports that a check has failed, and decreases otherwise. RAoI might differ from the true AoI, which only a genie at the receiver, possessing the exact transmitted message, can compute. 
We consider a base station (BS) that samples information from one of multiple sources and communicates over a broadcast channel to the respective destination/user using finite-blocklength cyclic and deep learning (DL) based  error correction   and CRC-based error detection codes. 
The objective is to optimize transmit power, message-word length, and user selection, to minimize long-term average RAoI, considering power and distortion constraints.
The key trade-off here is that using longer message-words results in lower distortion, but a higher probability of error if transmitted with lower power. On the other hand, adopting shorter message-words leads to higher distortion. 
\subsection{Literature Survey}
There are several works which consider optimization of  the AoI metric and its variants with finite-blocklength codewords. 
\cite{dual_hop_with_fbl} considers a dual-hop communication system and formulates an optimization problem for minimizing average AoIs by optimally allocating blocklengths. The formulated problem happens to be non-convex and the authors solve it suboptimally by approximating the average AoI by a convex function.
\cite{AoI_eh_with_time_iot} considers a scenario where the sensor harvests and collects energy for a certain duration and transmits status updates to a destination using the collected energy with finite blocklength codewords. The work optimizes the time duration of energy collected; the longer the duration for which the energy is collected, the higher is the energy collected on average and the higher is the energy with which the codewords can be transmitted thereby reducing the error rate at the cost of consuming more time. 
\cite{AoI_outage_fbl} studies AoI outage with respect to blocklength allocation in a two-user orthogonal multiple access system and develops a recursive algorithm to minimize AoI outage based on steady-state analysis of the corresponding Markov Decision Process. 
\cite{AoI_bec_arq} minimizes average AoI by finding the optimal blocklength, where a packet in error is retransmitted using automatic repeat request (ARQ). 
In \cite{AoI_outage_existance_downlink}, the authors characterize the behavior of average AoI with respect to the blocklength, in a scenario where a previously transmitted packet is retransmitted until success or until a new packet arrives at the transmitter;  the success or failure of a packet is assumed to be known. 
\cite{Modiano} minimizes average AoI  in a wireless sensor network with multiple users by optimizing on blocklength allocated to each user subject to a total blocklength constraint. 
\cite{tail_letters} considers an uplink communication system and minimizes the transmitter's average transmit power and blocklength subject to a constraint on the tail behavoir of the AoI of each sensors using a Lyapunov stochastic optimization framework. 
\cite{ISIT_2018_Oswaldo} studies delay and peak AoI minimization when the packets arrive according to an independent and identical distributed Bernoulli distribution and served using a first come first serve queue using finite-blocklength results from \cite{PPV}. 
\cite{hv_poor_stat_bound}  considers an AoI metric over a wireless channel and optimizes the transmit power so as to minimize the peak AoI violation probability, when communication is carried out using finite blocklength codewords. Other works which study optimization of AoI under short blocklength codewords include \cite{FB1,FB2,FB3,FB4}.  AoI optimization has also been considered in a variety network settings, including cognitive-radio networks \cite{CR-NOMA}, 
slotted ALOHA networks \cite{SlottedALOHA} and unmanned-aerial vehicle (UAV) networks \cite{UAV-Networks}.
A summary of previous works is provided in Table~\ref{literature_table}, which lists the optimization variables used.


 \begin{table}[t]
	\centering
 \vspace{0.2cm}
 \setlength{\tabcolsep}{0pt} 
\sisetup{group-digits=false} 
	\caption{A summary of decision variables in previous works on finite-blocklength communication for AoI optimization.
 }
 \small
	\begin{tabular}{|c|c|}
		\hline
		  Decision Variable  &  References  \\ \hline \hline 
            Blocklength  & \cite{dual_hop_with_fbl},\cite{AoI_outage_fbl},\cite{n_opt_MDP}    \\ \hline	
            Transmit power and blocklength & \cite{short_pkt_cognitive_iot}     \\ \hline
            Time duration of energy collected &   \cite{AoI_eh_with_time_iot}   \\ \hline
            Blocklength with ARQ  & \cite{AoI_bec_arq}     \\ \hline
            Blocklength for retransmissions \&	analysis of AoI behavior& \cite{AoI_outage_existance_downlink} \\
            \hline
            Blocklength allocation for multiple users & \cite{n_opt_MDP} \\
            \hline
            Transmitter's average transmit power and blocklength  & \cite{tail_letters}\\
            \hline
            Blocklength and queuing policy  & \cite{ISIT_2018_Oswaldo}\\
            \hline
            Blocklength and update rate & \cite{FB1} \\
            \hline

            
	\end{tabular}
 \medskip
\label{literature_table}
\end{table}

\subsection{Literature Gap and Contributions}
The key differences between current and earlier works in terms of system models, decision variables, and optimization objectives and  resulting challenges are as follows: 
\begin{itemize}[leftmargin=*]
\item \emph{Accounting for trade-off among message-word length, transmit power and distortion:}  Previous research has focused on optimizing channel blocklength, while message-word length optimization has been overlooked. In practice, varying message-word length is a common approach to achieving desired code rates with fixed channel blocklength \cite{dvb-s2x}. In this work, we consider message-word length optimization, which cannot be obtained by extending previous works on channel blocklength optimization. In freshness-critical applications, carrying forward remaining message bits to the next codeword may not be advantageous as they may become stale. Therefore, we may need to transmit fewer bits in the present time slot, leading to distortion, or transmit more bits at the expense of higher energy costs, which we take into account in our study, unlike in earlier works. 
\item \emph{Incorporation of practical error correction and detection codes:}  Previous studies on finite-blocklength codewords that make transmission decisions based on AoI have assumed perfect error detection.  
   However, this assumption is not realistic in practice as mechanisms such as CRC must be adopted for error detection  and they consume additional transmit power and channel resources. As a result, results obtained under the assumption of perfect error detection may not be accurate, especially for short block-lengths.  
We consider practical error correction and detection codes, which makes it difficult to derive an analytical expression for the error probability, in turn making it difficult to analytically optimize resource allocation. 
\item \emph{Consideration of broadcast channel:} Previous works on finite-blocklength communication have not considered broadcast channels (BC). We address this gap by considering scheduling users for transmission over a BC, in addition to optimizing message-word length and transmit power. At any given time, the BS can communicate with at most one user, as considered in \cite{Modiano, Kadota2018SchedulingPF,sched_algos_bcn}. While this simplifies communication compared to the case when multiple users are allowed to transmit simultaneously, scheduling and resource allocation remain non-trivial in this case, similar to \cite{Modiano, Kadota2018SchedulingPF,sched_algos_bcn}. 
In contrast to \cite{energy_eff_noma_bc}, which assumes that only the channel distribution information is available at the transmitter, our work considers the scenario where the instantaneous channel power gain information is available at the transmitter. This allows the transmitter to adapt the transmit powers to more effectively deliver bits to the users.
\end{itemize}


\begin{figure}[t]
    \centering
    \scalebox{1.2}{
    \begin{tikzpicture}[
state/.style={draw, rectangle,  minimum width=1cm, minimum height=.5cm},
arrow/.style={-Stealth, shorten >=1pt},
dot/.style={minimum size=4pt, inner sep=0pt, rounded corners=1pt, fill},
node distance=0.1cm and 2cm,
]
\node[state,fill=black!10] (x1) {\scriptsize Source $1$};
\node[state,fill=black!10, below=of x1] (x2) {\scriptsize Source $2$};
\node [below=-0.1cm of x2] {$\vdots$};
\node[state,fill=black!10, below=0.8cm of x2] (xn) {\scriptsize  Source $M$};

\node[state] at ($(x1)!0.5!(xn)+(2,0)$) (bs) {BS};

\node[state,fill=black!10, right=2.85cm of x1] (y1) {\scriptsize User $1$};
\node[state,fill=black!10, below=of y1] (y2) {\scriptsize  User $2$};
\node[state,fill=black!10, below=0.8cm of y2] (ym) {\scriptsize User $M$};

\draw[arrow] (x1) -- node[above, font=\small]{}(bs);
\draw[arrow] (x2) -- node[above, font=\small]{}(bs);
\draw[arrow] (xn) -- node[above, font=\scriptsize]{}(bs);

\node [below=-0.1cm of y2] {$\vdots$};

\draw[arrow, dotted] (bs) -- node[above, font=\small]{}(y1);
\draw[arrow, dotted] (bs) -- node[above, font=\small]{}(y2);
\draw[arrow, dotted] (bs) -- node[above, font=\small]{}(ym);

\end{tikzpicture}
    }
    \caption[]{We consider a broadcasting communication system with $M$ sources, a BS and $M$ destinations/users. The sources are co-located with the BS and the BS communicates with the users via a noisy channel.}
\label{fig:BS_SETUP}
\end{figure}
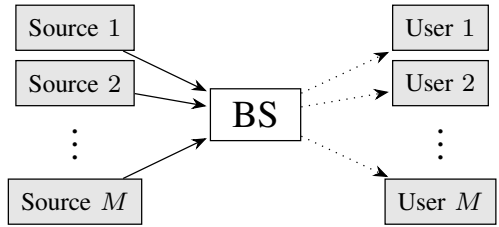

To fill the above  gap, 
we consider the problem of optimal allocation of message-word length, transmit power, and user scheduling in a  BC. We   minimize the long-term average RAoI, subject to average power and distortion constraints. 
Our main contributions are: 
\begin{itemize}[leftmargin=*]
    \item We consider practical cyclic and DL-based error correction and CRC-based error detection codes, and numerically obtain the relationship between error probability, message-word length, and transmit power.
    \item 
    We obtain age-agnostic   randomized  
  and age-aware drift-plus-penalty   policies for minimizing long-term average RAoI and derive bounds on their performance. 
    \item By  extensive simulations, we demonstrate that employing longer CRC codes results in higher RAoI due to their better error detection capability, and because they entail increased power and channel resource consumption. We also observe that DL-based codes achieve lower RAoI, as they can be designed with good performance for different message-word lengths unlike classical codes. We conclude that  existing finite-blocklength AoI optimization literature significantly underestimates actual AoI, as it assumes perfect error detection without adopting CRC.
\end{itemize}

\begin{figure*}[t]
 \centering   
\scalebox{0.65}{%
     \begin{tikzpicture}[auto,
    node distance = 18mm,
        start chain = A going right,
     block/.style = {rounded corners,
                     text width=#1, minimum height=12mm, align=center,
                     outer sep=0pt, on chain},
     block/.default = 1mm,
 container/.style = {rounded corners, fill=fgreen!50,
                     inner xsep=1mm, inner ysep=7mm},
                        ]
\centering
\node   [fill=black!10,block=20mm,draw=black,line width=1pt] {$\{0,1\}^{k_i^0}$};    
\node   [fill=black!10,block=15mm,draw=black,line width=1pt] {CRC};
\node   [fill=black!10,block=20mm,draw=black,line width=1pt] {Encoder};    
\node   [block=20mm,draw=black,line width=1pt] {Channel}; 
\node   [fill=black!10,block=20mm,draw=black,line width=1pt] {Decoder};    
\node   [fill=black!10,block=15mm,draw=black,line width=1pt] {CRC};
\node[above right=of A-6] (A-8) {$\Delta_i = n_i/N$};
\node[below right=of A-6] (A-9) {$\Delta_i = \Delta_i+1$};
%
       
\draw [-Stealth]
    (A-1) edge [""] (A-2)    
    (A-2) edge ["$k_i+c_i$"] (A-3)
    (A-3) edge ["$n_i$"] (A-4)
    (A-4) edge ["$n_i$"] (A-5);
     \draw[-Stealth] (A-5)  edge node [above,pos=0.5] {$k_i+c_i$}  (A-6);
     \draw[-Stealth] (A-5)  edge node [below,pos=0.5] {bits}  (A-6);
     \draw[-Stealth] (A-2)  edge node [below,pos=0.5] {bits}  (A-3);
     \draw[-Stealth] (A-3)  edge node [below,pos=0.5] {symbols}  (A-4);
     \draw[-Stealth] (A-4)  edge node [below,pos=0.5] {symbols}  (A-5);
\draw [-Stealth](A-6) -- ( $ (A-6.0)!0.2!(A-8.west|-A-6.0) $ ) |- (A-8.west) node[auto,pos=0.7] {Pass};
\draw [-Stealth](A-6) -- ( $ (A-6.0)!0.2!(A-9.west|-A-6.0) $ ) |- (A-9.west) node[auto,pos=0.7] {Fail};
    \end{tikzpicture}
}
\caption[]{The end-to-end communication setup considered in this work. The BS samples data packet of  length $k_i^0$ bits from  $i^{\rm th}$ source, compresses them to $k_i$ bits and appends  $c_i$ CRC bits  that are then encoded using either a classical or deep learning (DL) based error correction code. The encoded packets are transmitted to the user $i$ over a noisy channel. The user   decodes the received codeword and uses CRC to detect errors. If an error is detected, the packet is discarded and RAoI, $\Delta_i$ is increased. Otherwise, RAoI is reset.

}
\label{fig:Proposed-setup}
\end{figure*}

\section{System Model and Problem Formulation}
 We consider a broadcasting communication system with $M$ sources, a BS and $M$ destinations/users,   indexed by $i\in \mathcal{M}\triangleq\{1,2,\ldots,M\}$.  
 The system is time-slotted with unit-length slots which are indexed by $t\in \{1,2,\ldots\}$, where in a slot at most one user can transmit (as in \cite{Modiano}) a maximum of $N$ symbols. 
 At the start of slot $t$, a $k^0_i$-bit status update packet, $m_i(t)\in \{0,1\}^{k^0_i}$ can be generated at will from source $i$, attached to the BS. 
 The BS transmits $k_i(t)\leq k^0_i$ bits in slot $t$ by incurring certain distortion, which will be made concrete later in the section, where $k_i(t)\in \mathcal{K}_i\subset \mathbb{N}$.  
 The transmitter appends $c_i$-bit CRC code $\mathsf{D}$ to the $k_i(t)$ message bits, encodes the $(k_i(t)+c_i)$-bit CRC-appended message  using an encoder, $\mathsf{C}$, to obtain an $n_i$-symbol channel packet, $x_i(t)\in \mathbb{R}^{n_i}$, where $k^0_i+c_i\leq n_i\leq N$. Let $u_i(t)$ be the indicator variable which takes value $1$ if an encoded packet is transmitted in slot $t$ from user $i$ or zero otherwise.
 At slot $t$, let $P_i(t) \in \mathcal{P}_i\subset \mathbb{R}^+$ be the power allocated to a channel packet of user $i$ for transmission, i.e., $||x_i(t)||^2_2=nP_i(t)$, where $\mathcal{P}_i$ is a finite set. 
 The obtained packet,  $x_i(t)$, containing $n_i$ symbols, is transmitted over an additive white Gaussian noise (AWGN) BC, with noise variance corresponding to each receiver being unity. 
 The receiver decodes and obtains an estimate of the transmitted CRC-appended message sequence on which CRC check is performed, which may fail or pass. Let $\hat{m}_i(t)$ be the estimate of the message received in time slot $t$ from user $i$ and  $v_i(t)$ be the indicator variable which takes value of $1$ if the CRC passes at the receiver for a received codeword from user $i$ and zero otherwise.

 
We use the following metric for every user $i$: 
 \begin{align}\label{eq:age_evolution}
 \Delta_i(t) = \begin{cases}
 \Delta_{i}(t-1) + 1  & \text{if $u_i(t)v_i(t)=0$},\\
 \frac{n_i}{N} &\text{otherwise},
 \end{cases}
 \end{align}
 which we refer to as the instantaneous  RAoI at the end of slot $t$ in user $i$, for $t\in \{1,2,\ldots,\}$ and $i\in \{1,2,\ldots,M\}$ and $\Delta_i(0)\triangleq n_i/N$.    
The condition $u_i(t)v_i(t)=0$ may come true under the following two cases: (i) when a packet is transmitted, and the CRC fails i.e., when $u_i(t)=1$ and $v_i(t)=0$, or (ii) when  no packet is transmitted, i.e., when $u_i(t)=0$. Under the first case, the received packet is discarded and in both the cases, the RAoI increases. On the other hand, when the packet is transmitted and the CRC passes, we accept the packet and the RAoI is reset.  However, since $k_i(t)\leq k^0_i$, the packet is received with distortion of $d(k_i(t))$, where $d(\cdot)$ is a convex decreasing function.  That is, the distortion incurred in time slot $t$ for the packet from user $i$ is $u_i(t)v_i(t)d(k_i(t))$. We only count the distortion of successfully received packets.

Our objective is to deliver information as fresh as possible while keeping the average distortion and transmit power low. We achieve this by solving the following optimization problem for different choices of error detection and error correction codes,  $\mathsf{D}$ and $\mathsf{C}$, respectively:
\begin{subequations}\label{eq:main_opt_problem}
\begin{align}
 A^{\mathsf{D}, \mathsf{C}}_{\rm opt} = \min_{\pi}& \lim_{T\rightarrow \infty}\frac{1}{TM} \mathbb{E} \left[ \sum_{i=1}^M \sum_{t=1}^T  w_i\Delta_i(t) \right], \\
 \text{subject to}\ & \sum_{i=1}^{M} u_{i}(t) \leq 1, ~u_i(t)\in \{0,1\},\label{eq:main_policy_constraint}\\
 & \lim_{T\rightarrow \infty}\frac{1}{T}\sum_{t=1}^T \sum_{i=1}^M \mathbb{E}[u_i(t)P_i(t)]\leq \bar{P}, \label{eq:power_constraint}\\
 & \lim_{T\rightarrow \infty}\frac{1}{T}\sum_{t=1}^T  \mathbb{E}[u_i(t)v_i(t)d_i(k_i(t))]\leq \bar{d}_i, \label{eq:dist_constraint}\\
 & P_i(t) \in \mathcal{P}_i,\; k_i(t)\in \mathcal{K}_i, 
 \label{eq:kPset}
 \end{align}
 \end{subequations}
 for all $i\in \mathcal{M}$ and $t\in \{1,2,\ldots\}$, 
 where the expectations are with respect to the randomness in the channel and choice of policy $\pi$,  and the instantaneous RAoI, $\Delta_i(t)$ and distortion, $d_i(t)$.
 A policy $\pi$ gives a decision rule for selecting $u_i(t)$, $k_i(t)$ and $P_i(t)$ in slot $t$ for user $i$. 
Before solving  \eqref{eq:main_opt_problem}, we note the following lower bound, derivable via methods in \cite{eytan_modiano}:
\begin{subequations}\label{eq:lower_bound_opt_problem}
\begin{align}
L^{\mathsf{D}, \mathsf{C}}  = \min_{\pi}&\;\frac{1}{2M} \mathbb{E} \left[\sum_{i=1}^Mw_i\left( \frac{1}{\bar{q}_i}+\frac{n_i}{N}\right)\right], \\
 \text{subject to}\ &\eqref{eq:main_policy_constraint},\eqref{eq:power_constraint},\eqref{eq:dist_constraint},\eqref{eq:kPset},
 \end{align}
 \end{subequations}
where $\bar{q}_i \triangleq \lim_{T\rightarrow \infty}(1/T)\sum_{t=1}^T\mathbb{E} \left[u_i(t)v_i(t)\right]$ is the average number of  packets successfully delivered by  $i^{\rm th}$ source per slot under optimal solution to  \eqref{eq:lower_bound_opt_problem} and  $L^{\mathsf{D}, \mathsf{C}} \leq A^{\mathsf{D}, \mathsf{C}}_{\rm opt}$. 



\section{Solution}
In this section, we propose two policies: an age-agnostic stationary randomized policy (SRP) that does not depend on time or RAoI, and an age-aware drift-plus-penalty policy that operates using RAoI, and derive bounds on their performance.

 \subsection{Age-Agnostic Stationary Randomized Policy (SRP) }
The policy is: 
 \emph{User $i$ is selected in a time slot to transmit $k\in \mathcal{K}_i$ bits with power $P\in \mathcal{P}_i$, with probability  $\mu_i(k,P)$.} 
 


We now transform \eqref{eq:main_opt_problem} under the above SRP.  Let $\epsilon_i^{\mathsf{D}, \mathsf{C}}(k,P)$ be probability of successful reception of a packet, as indicated by the CRC. 
Then, the probability that a transmission  is successful in a slot is $p_i=\sum_{k\in \mathcal{K}_i,P\in \mathcal{P}_i}\epsilon_i^{\mathsf{D}, \mathsf{C}}(k,P)  \mu_i(k,P)$.
When a transmission is successful, the RAoI of user $i$ drops to $n_i/N$.   Hence,  the inter-delivery intervals form a sequence of independent and identical distributed random variables having  geometric distribution with parameter, $p_i$. In this case, the long-term average RAoI for user $i$, can be specialized to 
\begin{align*}
  \lim_{T\rightarrow \infty}\frac{1}{T}\sum_{t=1}^T  \mathbb{E}[w_i\Delta_i(t)]
 =w_i\left(\frac{1}{p_i}+\frac{n_i}{N}-1\right).  
\end{align*}

From this, we can rewrite \eqref{eq:main_opt_problem} as follows: 
\begin{subequations}\label{eq:SRP}
 \begin{align}
& A_{\rm SRP}^{\mathsf{D}, \mathsf{C}} =\min_{0 \leq \mu_i(k,P) \leq 1} \frac{1}{M} \  \sum_{i=1}^M w_i\left(\frac{1}{p_i}+\frac{n_i}{N}-1\right),  \numberthis \label{eq:SRPO}\\
 & \text{subject to}\; \sum_{i\in \{1,2,\ldots,M\}} \sum_{k\in \mathcal{K}_i,P\in\mathcal{P}_i}\mu_i(k,P) = 1,\label{eq:SRP_mu1} \\ 
 &\qquad \;\;\; \;\;\;\;\;\sum_{k\in \mathcal{K}_i,P\in \mathcal{P}_i}\sum_{i=1}^M  P\mu_i(k,P) \leq \bar{P},
 \label{eq:SRP_Power} \\
&\qquad  \;\;\; \;\;\;\;\;\sum_{k\in \mathcal{K}_i,P\in\mathcal{P}_i}\mu_i(k,P)\epsilon_i^{\mathsf{D}, \mathsf{C}}(k,P) d_i(k)\leq \bar{d}_i,\label{eq:SRPdist}
 & 
 \end{align}
 \end{subequations}
 for all $i\in \mathcal{M}$, $k\in \mathcal{K}_i$ and $P\in \mathcal{P}_i$. 

 The optimization problem in \eqref{eq:SRP} is a convex optimization problem because, the objective function is the inverse of an affine function with range in $\mathbb
{R}^{++}$ and all the constraint functions are affine. We solve the problem using standard numerical techniques.  We have the following result.


 
 \begin{theorem}
 	The optimal value $A_{\rm SRP}^{\mathsf{D}, \mathsf{C}} $ of problem in \eqref{eq:SRP} is at most twice of $A^{\mathsf{D}, \mathsf{C}}_{\rm opt}$ in \eqref{eq:main_opt_problem}, for any given  $\mathsf{D}$ and $\mathsf{C}$.
 \end{theorem}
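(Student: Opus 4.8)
The plan is to sandwich $A_{\rm SRP}^{\mathsf{D},\mathsf{C}}$ using the lower bound already recorded in the excerpt, i.e.\ to prove
\[
A_{\rm SRP}^{\mathsf{D},\mathsf{C}} \;\le\; 2\,L^{\mathsf{D},\mathsf{C}} \;\le\; 2\,A^{\mathsf{D},\mathsf{C}}_{\rm opt},
\]
where the second inequality is exactly $L^{\mathsf{D},\mathsf{C}}\le A^{\mathsf{D},\mathsf{C}}_{\rm opt}$ from \eqref{eq:lower_bound_opt_problem}. So the whole task reduces to the first inequality, $A_{\rm SRP}^{\mathsf{D},\mathsf{C}}\le 2L^{\mathsf{D},\mathsf{C}}$. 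The idea is the classical one for age problems: exhibit a single feasible stationary randomized policy whose per-slot success probabilities $p_i$ equal the optimal delivery rates $\bar q_i$ of the lower-bound problem, and note that its RAoI value is then at most $2L^{\mathsf{D},\mathsf{C}}$, the factor $2$ being precisely the $1/2$ that distinguishes the age of a memoryless scheduler of throughput $\bar q_i$ from the renewal-reward bound $\frac{1}{2M}\sum_i w_i(1/\bar q_i + n_i/N)$ built into $L^{\mathsf{D},\mathsf{C}}$.

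To build that policy, let $\pi^\star$ attain the minimum in \eqref{eq:lower_bound_opt_problem} (or come within $\delta$ of it and let $\delta\downarrow0$ at the end), with per-user delivery rate $\bar q_i^\star = \lim_{T}(1/T)\sum_t \mathbb E[u_i(t)v_i(t)]$. Define the long-run action frequencies
\[
\mu_i(k,P) \;\triangleq\; \lim_{T\to\infty}\frac1T\sum_{t=1}^T \Pr\big(u_i(t)=1,\ k_i(t)=k,\ P_i(t)=P\big),
\]
for $i\in\mathcal M$, $k\in\mathcal K_i$, $P\in\mathcal P_i$ (passing to a Ces\`aro-convergent subsequence if needed; equivalently one may invoke the standard fact, cf.\ \cite{eytan_modiano}, that the region of achievable long-run averages of a general policy is already achievable by a stationary randomized one). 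From \eqref{eq:main_policy_constraint}, $\sum_{i,k,P}\mu_i(k,P)\le1$; the residual mass corresponds to slots with no transmission, which I would place on an idle action so that \eqref{eq:SRP_mu1} holds (equivalently, one may read \eqref{eq:SRP_mu1} as ``$\le1$''). Since the channel noise is i.i.d.\ across slots and $\mathsf D,\mathsf C$ are fixed, $\epsilon_i^{\mathsf{D},\mathsf{C}}(k,P)=\Pr(v_i(t)=1\mid u_i(t)=1,k_i(t)=k,P_i(t)=P)$ regardless of $t$ and history; hence, by the tower property and linearity of expectation,
\[
p_i \;=\; \sum_{k\in\mathcal K_i,\,P\in\mathcal P_i}\epsilon_i^{\mathsf{D},\mathsf{C}}(k,P)\,\mu_i(k,P)\;=\;\bar q_i^\star ,
\]
and likewise the left-hand sides of \eqref{eq:power_constraint} and \eqref{eq:dist_constraint} evaluated at $\pi^\star$ coincide with those of \eqref{eq:SRP_Power} and \eqref{eq:SRPdist} evaluated at $\{\mu_i(k,P)\}$. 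As $\pi^\star$ is feasible for \eqref{eq:lower_bound_opt_problem}, it follows that $\{\mu_i(k,P)\}$ is feasible for \eqref{eq:SRP}.

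It then remains to evaluate the objective \eqref{eq:SRPO} at this feasible point. Using $p_i=\bar q_i^\star$ and $w_i\ge0$ (so the $-1$ only decreases the value),
\[
\frac1M\sum_{i=1}^M w_i\Big(\frac1{p_i}+\frac{n_i}{N}-1\Big)\;\le\;\frac1M\sum_{i=1}^M w_i\Big(\frac1{\bar q_i^\star}+\frac{n_i}{N}\Big)\;=\;2\,L^{\mathsf{D},\mathsf{C}}.
\]
Since $A_{\rm SRP}^{\mathsf{D},\mathsf{C}}$ is the minimum of \eqref{eq:SRPO} over feasible $\mu$, this yields $A_{\rm SRP}^{\mathsf{D},\mathsf{C}}\le 2L^{\mathsf{D},\mathsf{C}}\le 2A^{\mathsf{D},\mathsf{C}}_{\rm opt}$ (if $\pi^\star$ were only $\delta$-optimal, the same chain gives $A_{\rm SRP}^{\mathsf{D},\mathsf{C}}\le 2(L^{\mathsf{D},\mathsf{C}}+\delta)$ for all $\delta>0$, and we let $\delta\downarrow0$).

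I expect the main obstacle to be the bookkeeping around extracting well-defined stationary frequencies $\mu_i(k,P)$ from an arbitrary, possibly history-dependent $\pi^\star$, and in particular handling the no-transmission slots so that the equality \eqref{eq:SRP_mu1} is met without disturbing feasibility of \eqref{eq:SRP_Power}--\eqref{eq:SRPdist}. Once the conditional-independence fact about $v_i(t)$ given $(k_i(t),P_i(t))$ under the i.i.d.\ channel is in hand, the identity $p_i=\bar q_i^\star$ and the translation of the power and distortion constraints are routine, and the factor $2$ simply falls out of the $1/2$ deliberately placed in the definition of $L^{\mathsf{D},\mathsf{C}}$.
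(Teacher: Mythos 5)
Your proposal is correct and follows essentially the same route as the paper: identify the feasible set of the lower-bound problem \eqref{eq:lower_bound_opt_problem} (restricted to stationary randomized policies, with $\bar q_i = p_i$) with that of \eqref{eq:SRP}, observe that the two objectives differ only by the factor $2$ and the shift $-\frac{1}{M}\sum_i w_i \le 0$, and conclude $A_{\rm SRP}^{\mathsf{D},\mathsf{C}} \le 2L^{\mathsf{D},\mathsf{C}} \le 2A_{\rm opt}^{\mathsf{D},\mathsf{C}}$. The only difference is presentational: the paper invokes a cited result that a stationary randomized policy is optimal for \eqref{eq:lower_bound_opt_problem}, whereas you reconstruct that fact by extracting Ces\`aro action frequencies from a near-optimal policy, which is a fuller but equivalent justification of the same step.
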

 \begin{proof}
 Following \cite{TWC_RB}, we can show that a stationary randomized policy is an optimal solution to \eqref{eq:lower_bound_opt_problem}. The most general stationary randomized policy is the same as the one proposed above. Under this policy,  $\bar{q}_i = p_i$ and the constraints \eqref{eq:main_policy_constraint}, \eqref{eq:power_constraint} and  \eqref{eq:dist_constraint}  evaluate to \eqref{eq:SRP_mu1}, \eqref{eq:SRP_Power} and \eqref{eq:SRPdist}, respectively. Therefore, the constraints in \eqref{eq:SRP} and \eqref{eq:lower_bound_opt_problem} are identical and the objective function in \eqref{eq:lower_bound_opt_problem} is a scaled (by $1/2$) and shifted (by $+1$) form of that in \eqref{eq:SRP}. This means that both problems have the same optimal solution.  Hence,  ${A_{\rm SRP}^{\mathsf{D}, \mathsf{C}} }/{A^{\mathsf{D}, \mathsf{C}}_{\rm opt} } \leq {A_{\rm SRP}^{\mathsf{D}, \mathsf{C}} }/{L^{\mathsf{D}, \mathsf{C}}} <2$, as desired.
 
 \end{proof}


\subsection{Age-Aware Drift-Plus-Penalty  Policy (DPP)}
In the above SRP, the BS does not use the knowledge of instantaneous RAoI at the receiver.  In this section, we propose  a DPP policy, which utilizes the  knowledge of instantaneous RAoI at the receiver. 
Define the   virtual queues, $Q_1(t)$ and $\textbf{Q}_2(t)$, such that 
and $\mathbf{Q}_2(t) \triangleq [Q_{1,2}(t),Q_{2,2}\ldots,Q_{M,2}(t)]^T$. They evolve as follows: 
\begin{subequations}\label{eq:queue_evolution}
\begin{flalign}
&Q_{1}(t) = \max \{Q_{1}(t-1)-\bar{P}, 0\}+\sum_{i=1}^Mu_i(t) P_i(t),&& \label{avgpower_queue_evolution}\\
& Q_{i, 2}(t)=\max \left\{Q_{i, 2}(t-1)-\bar{d}_i, 0\right\}+u_i(t)v_i(t)d_i(t), &&\label{avgdistortion_queue_evolution}
& 
 \end{flalign}
\end{subequations}
where $Q_{1}(0)=0$ and $Q_{i,2}(0)=0$, for all $i \in \{1,2,\ldots,M\}$ and $t \in \{1,2,\ldots\}$.  
For computing \eqref{avgdistortion_queue_evolution}, we need to know whether or not the packet transmitted was decoded successfully using CRC, as indicated by $v_i(t)$.
Let $\textbf{S}(t-1)=\{Q_1(t-1),\mathbf{Q}_2(t-1), \boldsymbol{\Delta}(t-1)\}$, represent state at time instant $t-1$, where $\boldsymbol{\Delta}(t-1) \triangleq [\Delta_{1}(t-1),\ldots, \Delta_M(t-1)]$. 
 The conditional Lyapunov drift that measures the expected change in the Lyapunov function over time slots, given the system's state $\textbf{S}(t-1)$ in slot $t-1$ is given by
\begin{align*} 
	& \theta(\textbf{S}(t-1))=  \mathbb{E}\left[L\left(Q_{1}(t)\right)-L\left(Q_{1}(t-1)\right) \mid \textbf{S}(t-1)\right] \nonumber \\
	&\qquad \qquad \;\;\; + \mathbb{E}\left[L\left(\textbf{Q}_{2}(t)\right)-L\left(\textbf{Q}_{2}(t-1)\right) \mid \textbf{S}(t-1)\right],
\end{align*} 
where $L(Q_{1}(t))=(V_1/2) Q_{1}^2(t)$ and $L(\textbf{Q}_{2}(t))=(V_2/2)\sum_{i=1}^M Q_{i,2}^2(t)$. 
Using DPP approach in \cite{Neely}, a solution to \eqref{eq:main_opt_problem} is obtained by solving:
\begin{subequations}\label{eq:dpp_problem}
\begin{align}
  \text{minimize} & \  \theta(\textbf{S}(t-1))+  \sum_{i=1}^M \beta_i\mathbb{E}\left[\Delta_i(t)|\textbf{S}(t-1)\right],\label{eq:main_dpp_problem} \\
 \text{subject to}&\ \;  \sum_{i=1}^{M} u_{i}(t) \leq 1, u_i(t)\in \{0,1\},\label{eq:dpp_policy_constraint}
 \end{align}
 \end{subequations}
where $i\in \{1,2,\ldots,M\}$, $t\in\{1,2,\ldots,\}$. The parameters $\beta_1,\beta_2,\ldots, \beta_M>0$ act as trade-off factors, determining the relative importance of the penalty which is the weighted sum of expected RAoIs and the Lyapunov drift $\theta(\textbf{S}(t-1))$.
Instead of directly solving \eqref{eq:dpp_problem}, we derive an upperbound on its solution using the approach described in \cite{Neely}. We then demonstrate that by minimizing the upperbound, the average RAoI achieved is close to the optimal value $A_{\rm opt}^{\mathsf{D}, \mathsf{C}}$ specified in \eqref{eq:main_opt_problem}.
In Appendix A, we show that 
\begin{align}
&\theta(\textbf{S}(t-1)) \leq  MB_1 + V_1Q_1(t)\left(\sum_{i=1}^M \mathbb{E}\left[u_i(t)P_i(t)|\textbf{S}(t-1)\right]-\bar{P}\right)\nonumber\\
&+ MB_2 + \sum_{i=1}^M V_2Q_{i,2}(t)\left(\mathbb{E}\left[u_i(t)v_i(t)d_i(t)\mid \textbf{S}(t-1)\right]-\bar{d}_i\right),\nonumber
\end{align}
where $B_1 =  (V_1/2M) \left(\bar{P}^2 +  (\max \sum_{i=1}^M \mathcal{P}_i)^2\right)$ and $B_2 = (V_2/2M) \sum_{i=1}^M \left(\bar{d}_i^2 + (d_i^2(\min \mathcal{K}_i)\right)$, and $V_1>0$ and $V_2>0$. 
Using this,  we formulate the following problem:
\begin{subequations}\label{eq:dpp_upperbound_opt_problem}
\begin{align}
  \min_{u_i(t), k_i(t), P_i(t)}& \   V_1Q_{1}(t) \left( \sum_{i=1}^M u_i(t)P_i(t)-\bar{P}\right)+ \sum_{i=1}^M \beta_i \Delta_i(t) \nonumber \\ &   + V_2\sum_{i=1}^MQ_{i,2}(t)\left(u_i(t)\epsilon(k,P)d_i(t)-\bar{d}_i\right), \numberthis \label{eq:dpp_upper_bound_problem}\\
 \text{subject to}\ & \sum_{i=1}^{M} u_{i}(t) \leq 1, u_i(t)\in \{0,1\},\label{eq:dpp_upb_policy_constraint}
\end{align}
\end{subequations}
where 
$\Delta_i(t)=\left(\Delta_i (t-1)+1\right)\left(1-u_i(t)\right)+u_i(t)\left(\Delta_i(t-1)+1-\epsilon_i(k_i(t),P_i(t))\Delta_i(t-1)\right)$. We solve \eqref{eq:dpp_upperbound_opt_problem} as follows.  At each time slot $t$, we observe  $\textbf{S}(t-1)$ and select actions $u_i(t)$, $k_i(t)$, and $P_i(t)$ that minimize the objective in \eqref{eq:dpp_upperbound_opt_problem} in a greedy manner. We then update the virtual queues according to \eqref{eq:queue_evolution}.
The DPP policy (see Algorithm~\ref{algo:DPP}) obtained by solving \eqref{eq:dpp_upperbound_opt_problem} which is used to calculate the long-term average RAoI denoted by $A_{\rm DPP}^{\mathsf{D}, \mathsf{C}}$. 
We have the following result.



\begin{algorithm}[t] 
	\caption{DPP Policy}
	 \label{algo:DPP}
	\begin{algorithmic}[1]
		\STATE \textbf{Initialization:} $t =1$, $Q_1(1) = Q_{i,2}(1)=0$ for all $i\in \{1,2,\ldots,M\}$. 
		\STATE \textbf{Input:}  $\mathcal{K}_i,\mathcal{P}_i, \beta_i, \epsilon_i(k,P), \bar{P} \; \text{and} \; \bar{d}_i \;\; \forall i$ and  $T\gg 1$.
		\WHILE {$t \leq T$}
		\STATE Choose the action $(u_i(t), k_i(t), P_i(t))$ that minimizes \eqref{eq:dpp_upperbound_opt_problem}, such that $u_i(t)$ satisfies \eqref{eq:dpp_policy_constraint}, $k_i(t)\in \mathcal{K}_i$, and $P_i(t) \in \mathcal{P}_i$. 
	    \STATE Update $Q_1(t+1), Q_{i,2}(t+1)$ for all $i\in \{1,2,\ldots,M\}$ using \eqref{eq:queue_evolution}.
		\STATE  $t \leftarrow t+1$ 
		\ENDWHILE
	\end{algorithmic}
\end{algorithm}
 
 
\begin{theorem}\label{thm:DPP}
The DPP policy in Algorithm~\ref{algo:DPP} satisfies the average power and distortion constraints in \eqref{eq:power_constraint} and \eqref{eq:dist_constraint}, respectively. Moreover,  
\begin{align}
    \frac{A_{\rm DPP}^{\mathsf{D}, \mathsf{C}}}{A_{\rm opt}^{\mathsf{D}, \mathsf{C}}} \leq 2 + \frac{1}{L^{\mathsf{D}, \mathsf{C}}}\left(2(B_1+B_2)-\frac{1}{MN}\sum_{i=1}^M w_i n_i\right), 
\end{align}
where   $A_{\rm DPP}^{\mathsf{D}, \mathsf{C}}$ is the optimal RAoI achieved under DPP. 
\end{theorem}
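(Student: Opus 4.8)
The plan is to run the standard drift-plus-penalty argument of \cite{Neely}, taking as the benchmark the optimal stationary randomized policy $\{\mu_i^\star(k,P)\}$ that solves \eqref{eq:SRP} --- equivalently, by the proof of Theorem~1, the lower-bound problem \eqref{eq:lower_bound_opt_problem} --- and comparing the DPP policy against it slot by slot. The point is that $A_{\rm DPP}^{\mathsf{D},\mathsf{C}}$ will end up sandwiched between the cost of that benchmark (which Theorem~1 already relates to $A_{\rm opt}^{\mathsf{D},\mathsf{C}}$) and an additive correction governed by the drift constants $B_1,B_2$.

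First I would start from the drift bound established in Appendix~A and add the penalty $\sum_i\beta_i\mathbb E[\Delta_i(t)\mid\mathbf S(t-1)]$ to both sides, so that the right-hand side becomes $M(B_1+B_2)$ plus the objective of \eqref{eq:dpp_upper_bound_problem} written in conditional-expectation form. Since Algorithm~\ref{algo:DPP} selects, in every slot, the feasible action minimizing that objective, its drift-plus-penalty is at most the value obtained by instead applying the benchmark SRP action at the current state $\mathbf S(t-1)$. Under the benchmark the queue-weighted power term and each queue-weighted distortion term are non-positive, because the SRP satisfies \eqref{eq:SRP_Power} and \eqref{eq:SRPdist}, which are exactly the long-run forms of \eqref{eq:power_constraint} and \eqref{eq:dist_constraint}, and $Q_1,\mathbf Q_2\ge 0$; hence only $M(B_1+B_2)$ and the SRP's one-step expected weighted RAoI survive on the right. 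By memorylessness of the SRP and the RAoI recursion, that one-step term equals $\sum_i\beta_i\big((1-p_i^\star)(\Delta_i(t-1)+1)+p_i^\star n_i/N\big)$ with $p_i^\star=\bar q_i$.

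Next I would take expectations over $\mathbf S(t-1)$, sum over $t=1,\dots,T$, telescope the Lyapunov drift (using $L(Q_1(0))=L(\mathbf Q_2(0))=0$ and $L\ge 0$), divide by $T$, and let $T\to\infty$; the virtual-queue terms drop out, leaving $A_{\rm DPP}^{\mathsf{D},\mathsf{C}}$ bounded by the SRP's long-term average weighted RAoI plus a $B_1,B_2$-dependent constant. Since, from the geometric inter-delivery analysis preceding \eqref{eq:SRP}, the SRP's cost is $A_{\rm SRP}^{\mathsf{D},\mathsf{C}}=\frac1M\sum_i w_i(1/p_i^\star+n_i/N-1)$, and since, as in the proof of Theorem~1, $A_{\rm SRP}^{\mathsf{D},\mathsf{C}}$ and $L^{\mathsf{D},\mathsf{C}}$ are related by the affine map $A_{\rm SRP}^{\mathsf{D},\mathsf{C}}=2L^{\mathsf{D},\mathsf{C}}-\frac1M\sum_i w_i$, this collapses to the bound $A_{\rm DPP}^{\mathsf{D},\mathsf{C}}\le 2L^{\mathsf{D},\mathsf{C}}+2(B_1+B_2)-\frac1{MN}\sum_i w_i n_i$. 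Dividing by $A_{\rm opt}^{\mathsf{D},\mathsf{C}}$ and using $L^{\mathsf{D},\mathsf{C}}\le A_{\rm opt}^{\mathsf{D},\mathsf{C}}$ --- once as $2L^{\mathsf{D},\mathsf{C}}/A_{\rm opt}^{\mathsf{D},\mathsf{C}}\le 2$, and once on the remainder (non-negative since $B_1,B_2$ scale with $V_1,V_2$) as $1/A_{\rm opt}^{\mathsf{D},\mathsf{C}}\le 1/L^{\mathsf{D},\mathsf{C}}$ --- yields the stated ratio. For the feasibility claim, the very same telescoped inequality shows $\mathbb E[Q_1^2(T)]$ and $\sum_i\mathbb E[Q_{i,2}^2(T)]$ grow sub-quadratically in $T$, so $\mathbb E[Q_1(T)]/T\to0$ and $\mathbb E[Q_{i,2}(T)]/T\to0$; summing the recursions in \eqref{eq:queue_evolution} and passing to the limit then delivers \eqref{eq:power_constraint} and \eqref{eq:dist_constraint}.

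The genuine obstacle is the penalty bookkeeping caused by $\Delta_i(t)$ being self-referential: the benchmark's one-step penalty at $\mathbf S(t-1)$ carries a $(1-p_i^\star)\Delta_i(t-1)$ term evaluated along the \emph{DPP} trajectory, so after summation one must re-index $\sum_t\mathbb E[\Delta_i(t-1)]$ against $\sum_t\mathbb E[\Delta_i(t)]$, absorb the $O(1)$ boundary contributions, and argue RAoI mean-rate stability under DPP ($\mathbb E[\Delta_i(T)]/T\to 0$) before this self-referential term can be moved to the left-hand side --- and it is precisely this re-indexing, together with Theorem~1, that produces the multiplicative factor and the additive $2(B_1+B_2)$ slack. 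A secondary technical point is confirming that $2(B_1+B_2)-\frac1{MN}\sum_i w_i n_i$ is non-negative, so that the final replacement of $1/A_{\rm opt}^{\mathsf{D},\mathsf{C}}$ by $1/L^{\mathsf{D},\mathsf{C}}$ is legitimate; this holds whenever $V_1,V_2$ are not too small.
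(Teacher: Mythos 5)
Your proposal is correct and follows essentially the same route as the paper's proof: the standard Lyapunov drift-plus-penalty comparison against a stationary randomized benchmark (the paper plugs in $\mu_i(k,P)=q_i^L(k,P)/\epsilon_i(k,P)$ and $\beta_i=2w_i/q_i^L$ from the lower-bound problem \eqref{eq:lower_bound_opt_problem}, which is the same benchmark you invoke via Theorem~1), followed by telescoping, the re-indexing of the self-referential RAoI term, and the final use of $L^{\mathsf{D},\mathsf{C}}\leq A_{\rm opt}^{\mathsf{D},\mathsf{C}}$. The only cosmetic differences are that you establish feasibility via mean-rate stability of the virtual queues where the paper argues strong stability, and you explicitly flag the non-negativity of $2(B_1+B_2)-\frac{1}{MN}\sum_i w_i n_i$, a condition the paper uses implicitly.
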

\begin{proof}
We follow the techniques adopted in \cite{Modiano} and \cite{Neely} to prove the theorem. 
We first upper-bound the conditional Lyapunov drift of the virtual queues $Q_1(t)$ and $\textbf{Q}_2(t)$ and then prove that they are strongly stable.  Subsequently, we relate the minimum long-term average RAoI achieved by DPP and a lower bound on the optimal RAoI in \eqref{eq:main_opt_problem} to obtain the result.   Refer to   Appendix B   for detailed explanations. 
\end{proof}

\subsection{Periodic Round-Robin (PRR) Policy}
In this policy, the BS sends information to a user after a fixed number of slots, known as the period. We set a fixed value for the message-word length $k$ and sequentially select transmit powers from available choices, ensuring the chosen transmission period meets the average power requirement. Each user's information is transmitted with a fixed offset relative to a user selected as a reference user.


\section{Simulation Results}
In this section, we study how the average RAoI varies with system parameters for different choices of $\mathsf{D}$ and $\mathsf{C}$.


\subsection{Obtaining $\epsilon_i^{\mathsf{D}, \mathsf{C}}(k,P)$ for different $\mathsf{D}$ and $\mathsf{C}$}
For each $\mathsf{D}$ and $\mathsf{C}$, we compute $\epsilon_i^{\mathsf{D}, \mathsf{C}}(k,P)$, for $k\in \{4,5,\ldots,11\}$ and $P\in \{1,2,3,4\}$, and  utilize it to solve the RAoI minimization problem. We consider $n_i=15$ for all $i\in \{1,2,\ldots,M\}$. 

\subsubsection{Error Detection Codes, $\mathsf{D}$}
We consider the following error detecting codes, using CRC: 
Given a message represented as a polynomial $m(x)$, where the coefficients correspond to the bits of the message, the CRC-encoded message is obtained as:
 $p(x)=x^cm(x)+(x^cm(x)\mod g(x)$), where $g(x)$ is the generator polynomial. The message is then transmitted over a channel. At the receiver, the received message is decoded as: $\hat{p}(x)=p(x)+e(x)$,  where $e(x)$ is the error polynomial. The error is detected by checking the remainder,  $\hat{p}(x)\mod g(x)$. If the remainder is zero, then there was no error. Otherwise, an error occurred. Note that if $e(x)$ happens to be a multiple of $g(x)$, the error cannot be detected. In this study, we consider two CRC polynomials: a $1$-bit CRC polynomial, $g(x) = x+1$, and a $3$-bit CRC polynomial, $g(x) = x^3 + x+1$.  A detailed description of CRCs and their performance can be found in \cite{koopman_best_crc_site}.
\subsubsection{Error Correction Codes, $\mathsf{C}$}
We consider classical cyclic and DL-based codes for communication. Below, we describe how a codebook is generated for a given codeword length, $n$ and message-word length, $k$.
\paragraph{Classical Cyclic Codes}
We obtain binary codewords for classical cyclic codes using the \verb|cyclpoly| function in \textsc{Matlab}, and modulate them using binary phase shift keying. The \verb|cyclpoly| function takes in $n$ and $K$, and outputs a generator polynomial for a linear block code, using which codebook can be generated. The generator polynomial and codeword length is then passed to \verb|cyclgen|, which gives the corresponding parity-check matrix. Using \verb|syndtable|, a syndrome table is generated for decoding received symbols via syndrome-decoding \cite{cyclpoly_matlab}. 
\paragraph{DL-based Codes}
We obtain DL-based codes  using an auto-encoder architecture, consisting of an encoder, a noise layer and a decoder, as in \cite{timoshea}. 
 A one-hot encoded message, $m$,  of length $2^k$, is encoded using, $f_{\theta}(m))$ to obtain a codeword, $x\in \mathbb{R}^n$, satisfying $||x||_2^2\leq nP$. The codeword $x$ is then passed through a Gaussian noise layer with $n$ nodes, each adding a zero-mean unit-variance Gaussian random variable.  Subsequently, the decoder $g_{\phi}(\cdot)$, processes the output of the noise layer to obtain an estimate, $\hat{m}$, of the message $m$, i.e., $\hat{m} = g_{\theta}(h(f_{\theta}(m)))$. We parameterize $f_{\theta}$ and $g_{\phi}$ using fully-connected feed-forward neural networks and minimize the categorical cross-entropy loss, $l(m, \hat{m})$, between $m$ and $\hat{m}$. 
 The encoder  comprises of three layers: the first layer maps $\mathbb{R}^{2^k} \to \mathbb{R}^{n}$ using ReLU activation, while the second layer maps $\mathbb{R}^n \to \mathbb{R}^n$ with linear activation followed by a normalization layer to ensure $||x||_2^2\leq nP$. The decoder has two layers: the initial layer maps $\mathbb{R}^n \to \mathbb{R}^{2^k}$ using ReLU activation, and the second layer maps $\mathbb{R}^{2^k} \to \mathbb{R}^{2^k}$, with soft-max activation.


\begin{table}[t]
	\centering
 \vspace{0.3cm}
	\setlength{\tabcolsep}{2pt} 
	\sisetup{group-digits=false} 
	\caption{Comparison of average RAoI under different policies}
 \small
	\begin{tabular}{|p{4cm}|>{\centering\arraybackslash}p{2.5cm}|>{\centering\arraybackslash}p{2.5cm}| >
     {\centering\arraybackslash}p{2.5cm}|}
		\hline
		Error Correction and Detection Codes & Periodic Round-Robin & Stationary Randomized & Drift-Plus-Penalty \\ 
		\hline\hline 
		Cyclic code (genie) & $2.46$ & $2.14$ & $1.605$ \\ 
		\hline	
		Cyclic code (CRC-$1$) & $1.92$ & $2.07$ & $1.555$ \\ 
		\hline
		DL code (genie) & $2.74$ & $2.03$ & $1.529$ \\ 
		\hline
		DL code (CRC-$1$) & $1.96$ & $2.02$ & $1.516$ \\ 
		\hline
		PPV error expression & $1.84$ & $2$ & $1.506$ \\
		\hline
	\end{tabular}
	\medskip
	\label{srp_rr_comparison_table}
\end{table}

 \subsection{Studying Impact of System Parameters on Average RAoI}

In our study, we examine a scenario with two sources and destination nodes, i.e., with $M=2$.

\subsubsection{Comparing RAoI Across Policies}
In Table~\ref{srp_rr_comparison_table}, we compare the average age obtained using periodic round-robin and stationary randomized policies for  $w_1=w_2=1$, $\bar{P}=2$ units and  $\bar{d}_1 = \bar{d}_2 = 2^{-0.1}$. In the round-robin policy, a packet is transmitted to a user every $2$ slots. To satisfy the average power and distortion constraints, we set $k=10$ and alternate the value of $P$ among $\{1,2,3,4\}$ across the slots. Unlike the stationary randomized policy, the round-robin policy makes decisions independent of the error expression. When the genie is used, the error expression reflects the true errors, and the stationary randomized policy is optimized accordingly. However, when CRC is used, not all errors may be reported, affecting the  optimization. As a result, the stationary randomized policy performs better than the round-robin policy in the genie-aided case, and vice versa in the CRC case. The DPP (with $\beta_1=\beta_2 = 100$)  consistently outperforms both the round-robin and randomized policies by leveraging instantaneous information to adapt its decisions based on transmission success or failure.

\begin{figure*}[t!]
     \centering
     \begin{subfigure}{0.45\textwidth}
         \centering
         \scalebox{0.9}{
\begin{tikzpicture}

\definecolor{darkviolet1910191}{RGB}{191,0,191}
\definecolor{dimgray85}{RGB}{85,85,85}
\definecolor{gainsboro229}{RGB}{229,229,229}
\definecolor{goldenrod1911910}{RGB}{191,191,0}
\definecolor{green01270}{RGB}{0,127,0}

\definecolor{c1}{RGB}{191,191,191}
\definecolor{c2}{RGB}{0,0,0}
\definecolor{c3}{RGB}{0,127,100}

\begin{axis}[
axis background/.style={},
axis line style={black},
tick align=outside,
tick pos=left,
x grid style={gray!20},
 grid=both,
grid style={line width=.1pt, draw=gray!10},
major grid style={line width=.2pt,draw=gray!50},
xlabel=\textcolor{dimgray85}{Bound on Average Power, $\bar{P}$},
xmin=0.25, xmax=1,
xtick style={color=dimgray85},
y grid style={gray!20},
ylabel=\textcolor{dimgray85}{$A_{\rm SRP}^{\mathsf{D}, \mathsf{C}}$},
ymin=2, ymax=2.84,
ytick style={color=dimgray85},
legend style={at={(1,1)},anchor=north east}
]
\addplot [semithick, darkviolet1910191, mark=triangle*, mark size=2, mark options={solid}] 
table[x expr=\thisrowno{0}*0.5, y index=1] {%
0.5 2.13266224650181
0.55 2.12089358906002
0.6 2.10925409597271
0.65 2.09774167932643
0.7 2.0863542568764
0.75 2.07508972117049
0.8 2.06394622172901
0.85 2.05292176297822
0.9 2.04201444698096
0.95 2.031222488351
1 2.02054388010021
1.05 2.0187995994744
1.1 2.01705826715722
1.15 2.01531996358804
1.2 2.01358470224268
1.25 2.01185233508916
1.3 2.01012299676581
1.35 2.008396622154
1.4 2.00667320973866
1.45 2.00495274496825
1.5 2.00323524659272
1.55 2.00296990627284
1.6 2.00270465377856
1.65 2.00243947508143
1.7 2.00217436271113
1.75 2.00190932202399
1.8 2.00164435139763
1.85 2.00137948237637
1.9 2.00111464211489
1.95 2.00084987728415
2 2.00058521126726
2.05 2.00058517154209
2.1 2.00058517128024
2.15 2.00058517122729
2.2 2.00058517119691
2.25 2.00058517117809
2.3 2.0005851711742
2.35 2.00058517117056
2.4 2.00058517116797
2.45 2.00058517116952
2.5 2.00058517117929
2.55 2.00058517119224
2.6 2.00058517120803
2.65 2.00058517122618
2.7 2.00058521205836
2.75 2.00058520888479
2.8 2.00058520609141
2.85 2.00058520365936
2.9 2.00058520135941
2.95 2.00058519919512
};
\addlegendentry{\small DL code with $1$-bit CRC}
\addplot [semithick, green01270, mark=square*, mark size=2, mark options={solid}]
table[x expr=\thisrowno{0}*0.5, y index=1] {%
0.5 2.29173827845079
0.55 2.26366936951106
0.6 2.23627972419399
0.65 2.20954434535042
0.7 2.18344186957732
0.75 2.15794834189936
0.8 2.13304325847836
0.85 2.10870646089979
0.9 2.08491874239483
0.95 2.06166170498345
1 2.03891785589069
1.05 2.03570058676054
1.1 2.0324934720916
1.15 2.02929644458395
1.2 2.02610945855373
1.25 2.02293246088434
1.3 2.01976541117253
1.35 2.01660824123235
1.4 2.01346094553047
1.45 2.01032351034074
1.5 2.00719579346866
1.55 2.00658963739071
1.6 2.00598385067363
1.65 2.00537842636338
1.7 2.00477336571118
1.75 2.00416866921421
1.8 2.00356433998248
1.85 2.00296037486242
1.9 2.00235677523753
1.95 2.00175353930048
2 2.00115066223129
2.05 2.00115066110591
2.1 2.00115066044092
2.15 2.0011506595682
2.2 2.00115065944487
2.25 2.00115065960133
2.3 2.00115065964325
2.35 2.00115065970529
2.4 2.00115065969933
2.45 2.00115065972579
2.5 2.00115065971385
2.55 2.00115065972507
2.6 2.00115065970121
2.65 2.00115065973688
2.7 2.00115065969429
2.75 2.0011506597391
2.8 2.00115065973117
2.85 2.00115065977234
2.9 2.00115065978971
2.95 2.00115065978049
};
\addlegendentry{\small  DL code (genie)}
\addplot [semithick, red, mark=*, mark size=2, mark options={solid}]
table[x expr=\thisrowno{0}*0.5, y index=1] {%
0.5 2.39306012150808
0.55 2.35661122173003
0.6 2.32125601480647
0.65 2.28694599626177
0.7 2.25363539791286
0.75 2.22128120117136
0.8 2.18984296790133
0.85 2.15928215030049
0.9 2.12956258687909
0.95 2.10065003369337
1 2.0725120419927
1.05 2.06647917483984
1.1 2.06048131462729
1.15 2.05451817922119
1.2 2.04858947224829
1.25 2.0426948805927
1.3 2.03683411044856
1.35 2.03100687659284
1.4 2.02521290842338
1.45 2.01945187469418
1.5 2.01372356172249
1.55 2.01260190744372
1.6 2.01148153663305
1.65 2.0103624157421
1.7 2.00924453757686
1.75 2.00812790465132
1.8 2.00701251369077
1.85 2.00589834554098
1.9 2.00478542312022
1.95 2.00367373633881
2 2.00256328344586
2.05 2.00256328131896
2.1 2.00256328535895
2.15 2.00256328934239
2.2 2.00256329238529
2.25 2.00256329331217
2.3 2.00256329724928
2.35 2.00256329889711
2.4 2.0025633001316
2.45 2.00256330135575
2.5 2.00256330226228
2.55 2.00256330308326
2.6 2.00256330388905
2.65 2.00256330519601
2.7 2.00256330491937
2.75 2.00256330547481
2.8 2.00256330682755
2.85 2.00256330724435
2.9 2.00256330671703
2.95 2.00256330793353
};
\addlegendentry{\small Cyclic code with $1$-bit CRC}
\addplot [semithick, blue, mark=asterisk, mark size=2, mark options={solid}]
table[x expr=\thisrowno{0}*0.5, y index=1] {%
0.5 2.83826834936244
0.55 2.74867804610903
0.6 2.6645704767502
0.65 2.58545742527699
0.7 2.51090674284282
0.75 2.44053483618591
0.8 2.37399990322535
0.85 2.31099653469879
0.9 2.25125083033804
0.95 2.19451644138074
1 2.14057136243542
1.05 2.12851324237999
1.1 2.11659025765996
1.15 2.10480008216308
1.2 2.09314055564924
1.25 2.08160949049686
1.3 2.07020478355173
1.35 2.05892435388302
1.4 2.04776618383718
1.45 2.0367283160568
1.5 2.02580880276256
1.55 2.02366933409817
1.6 2.02153439275229
1.65 2.01940394822644
1.7 2.01727799153241
1.75 2.01515649395298
1.8 2.01303946076887
1.85 2.010926880549
1.9 2.00881871465324
1.95 2.00671496837588
2 2.00461562616777
2.05 2.00461562361287
2.1 2.00461562615514
2.15 2.00461562764067
2.2 2.00461562747583
2.25 2.00461558521573
2.3 2.00461562593112
2.35 2.00461562312418
2.4 2.00461562124117
2.45 2.00461562107373
2.5 2.00461562096712
2.55 2.00461562085542
2.6 2.00461562078457
2.65 2.00461562071412
2.7 2.00461562070433
2.75 2.00461562063838
2.8 2.00461562061916
2.85 2.0046156206143
2.9 2.00461562062468
2.95 2.00461562062024
};
\addlegendentry{\small  Cyclic code (genie)}

\addplot [semithick, goldenrod1911910, mark=triangle*, mark size=2, mark options={solid}]
table[x expr=\thisrowno{0}*0.5, y index=1] {%
0.5 2.11749978596494
0.55 2.10604627071507
0.6 2.09471600336992
0.65 2.08350691685757
0.7 2.07241724910752
0.75 2.06144501373282
0.8 2.0505882420091
0.85 2.03984541338461
0.9 2.02921440538161
0.95 2.01869368251906
1 2.00828148576931
1.05 2.00752168635253
1.1 2.0067624620105
1.15 2.00600381348712
1.2 2.00524572438852
1.25 2.00448823247927
1.3 2.00373129986428
1.35 2.00297495177028
1.4 2.00221915082644
1.45 2.00146394287117
1.5 2.0007092957264
1.55 2.00064553840414
1.6 2.0005818107075
1.65 2.00051804051628
1.7 2.00045430582872
1.75 2.00039057419175
1.8 2.00032682496597
1.85 2.00026311922256
1.9 2.0001993896937
1.95 2.00013564435519
2 2.00007195591854
2.05 2.00007192241562
2.1 2.00007194436024
2.15 2.00007193218107
2.2 2.00007193404795
2.25 2.00007193423035
2.3 2.00007193402581
2.35 2.00007193368926
2.4 2.00007193359155
2.45 2.00007193412782
2.5 2.00007193575034
2.55 2.00007193731559
2.6 2.00007193583296
2.65 2.00007193310467
2.7 2.00007193351345
2.75 2.00007193239523
2.8 2.00007193256317
2.85 2.00007193717532
2.9 2.00007193165485
2.95 2.00007193290448
};
\addlegendentry{\small Using PPV expression (genie)}


\end{axis}

\end{tikzpicture}}
         \caption{Comparison of the optimal average RAoI under SRP   with respect to $\bar{P}$, assuming $\bar{d}_1=\bar{d}_2=0.99$.}
         \label{fig:AoI_vs_avg_power_SRP}
     \end{subfigure}
     \hfill
     \begin{subfigure}{0.45\textwidth}
         \centering
         \scalebox{0.9}{
\begin{tikzpicture}

\definecolor{darkviolet1910191}{RGB}{191,0,191}
\definecolor{dimgray85}{RGB}{85,85,85}
\definecolor{gainsboro229}{RGB}{229,229,229}
\definecolor{goldenrod1911910}{RGB}{191,191,0}
\definecolor{green01270}{RGB}{0,127,0}

\begin{axis}[
axis background/.style={},
axis line style={black},
tick align=outside,
tick pos=left,
x grid style={gray!20},
 grid=both,
grid style={line width=.1pt, draw=gray!10},
major grid style={line width=.2pt,draw=gray!50},
xlabel=\textcolor{dimgray85}{Bound on Average Power, $\bar{P}$},
xmin=0.25, xmax=1,
xtick style={color=dimgray85},
y grid style={gray!20},
ylabel=\textcolor{dimgray85}{$A_{\rm DPP}^{\mathsf{D}, \mathsf{C}}$},
ymin=1.5, ymax=1.8,
ytick style={color=dimgray85},
legend style={at={(1,1)},anchor=north east}
]
\addplot [semithick, darkviolet1910191, mark=triangle*, mark size=2, mark options={solid}]
table[x expr=\thisrowno{0}*0.5, y index=1] {%
0.1 5.58176
0.2 3.157285
0.3 2.31562
0.4 1.894375
0.5 1.600175
0.6 1.580745
0.7 1.560105
0.8 1.54566
0.9 1.531935
1 1.515115
1.1 1.51227
1.2 1.50988
1.3 1.508245
1.4 1.506445
1.5 1.50232
1.6 1.502465
1.7 1.502435
1.8 1.502775
1.9 1.5027
2 1.504955
};
\addlegendentry{\small DL code with $1$-bit CRC}
\addplot [semithick, green01270, mark=square*, mark size=2, mark options={solid}]
table[x expr=\thisrowno{0}*0.5, y index=1] {%
0.1 5.976835
0.2 3.37165
0.3 2.46924
0.4 2.014785
0.5 1.717915
0.6 1.663905
0.7 1.630985
0.8 1.59472
0.9 1.56293
1 1.529285
1.1 1.52513
1.2 1.520075
1.3 1.515025
1.4 1.51039
1.5 1.505265
1.6 1.50569
1.7 1.50577
1.8 1.50541
1.9 1.505575
2 1.50512
};
\addlegendentry{\small DL code (genie)}
\addplot [semithick, red, mark=*, mark size=2, mark options={solid}]
table[x expr=\thisrowno{0}*0.5, y index=1] {%
0.1 6.239295
0.2 3.50092
0.3 2.553775
0.4 2.0938
0.5 1.79363
0.6 1.723625
0.7 1.677765
0.8 1.63562
0.9 1.59447
1 1.5568
1.1 1.544895
1.2 1.53702
1.3 1.52776
1.4 1.518825
1.5 1.510905
1.6 1.510045
1.7 1.510355
1.8 1.51003
1.9 1.50739
2 1.50601
};
\addlegendentry{\small Cyclic code with $1$-bit CRC}

\addplot [semithick, blue, mark=asterisk, mark size=2, mark options={solid}]
table[x expr=\thisrowno{0}*0.5, y index=1] {%
0.1 7.27636
0.2 4.09655
0.3 2.96499
0.4 2.428875
0.5 2.13021
0.6 1.950255
0.7 1.8274
0.8 1.75088
0.9 1.678245
1 1.603045
1.1 1.584645
1.2 1.56751
1.3 1.55092
1.4 1.53482
1.5 1.519485
1.6 1.51557
1.7 1.513275
1.8 1.511965
1.9 1.511845
2 1.503515
};
\addlegendentry{\small Cyclic code (genie)}

\addplot [semithick, goldenrod1911910, mark=triangle*, mark size=2, mark options={solid}]
table[x expr=\thisrowno{0}*0.5, y index=1] {%
0.1 5.53717
0.2 3.12917
0.3 2.302075
0.4 1.879995
0.5 1.58857
0.6 1.56719
0.7 1.55211
0.8 1.53735
0.9 1.522055
1 1.506535
1.1 1.50576
1.2 1.505645
1.3 1.50414
1.4 1.5043
1.5 1.50353
1.6 1.503275
1.7 1.503415
1.8 1.503005
1.9 1.50337
2 1.503265
};
\addlegendentry{\small Using PPV expression (genie)}
\end{axis}

\end{tikzpicture}}
         \caption{Comparison of the  optimal average RAoI under DPP with respect to $\bar{P}$, assuming $\bar{d}_1=\bar{d}_2=0.99$.}
         \label{fig:AoI_vs_avg_power_DPP}
     \end{subfigure}
     \medskip
     \begin{subfigure}{0.45\textwidth}
         \centering
         \scalebox{0.9}{
\begin{tikzpicture}

\definecolor{darkviolet1910191}{RGB}{191,0,191}
\definecolor{dimgray85}{RGB}{85,85,85}
\definecolor{gainsboro229}{RGB}{229,229,229}
\definecolor{goldenrod1911910}{RGB}{191,191,0}
\definecolor{green01270}{RGB}{0,127,0}

\begin{axis}[
axis background/.style={},
axis line style={black},
tick align=outside,
tick pos=left,
x grid style={gray!20},
 grid=both,
grid style={line width=.1pt, draw=gray!10},
major grid style={line width=.2pt,draw=gray!50},
xlabel=\textcolor{dimgray85}{Bound on Average Distortion, $\bar{d}$},
xmin=0.4, xmax=0.55,
xtick style={color=dimgray85},
y grid style={gray!20},
ylabel=\textcolor{dimgray85}{$A_{\rm SRP}^{\mathsf{D}, \mathsf{C}}$},
ymin=2, ymax=2.39,
ytick style={color=dimgray85},
legend style={at={(1,1)},anchor=north east}
]
\addplot [semithick, darkviolet1910191, mark=triangle*, mark size=2, mark options={solid}]
table {%
0.4 2.35007016796038
0.42 2.25037020817966
0.44 2.16148903985004
0.46 2.08534343603288
0.48 2.02394788667139
0.5 2.02054388085303
0.52 2.02054388375992
0.54 2.02054387966622
0.56 2.02054388014649
0.58 2.02054388201893
0.6 2.02054393492559
0.62 2.0205438930977
0.64 2.02054394308393
0.66 2.02054390169154
0.68 2.02054388922171
0.7 2.0205438886824
0.72 2.02054387343499
0.74 2.02054388225667
0.76 2.02054388373545
0.78 2.02054388033879
};
\addlegendentry{\small DL code with $1$-bit CRC}

\addplot [semithick, green01270, mark=square*, mark size=2, mark options={solid}]
table {%
0.4 2.37559787712026
0.42 2.27134598332333
0.44 2.18101987625957
0.46 2.09760284927999
0.48 2.03891783953929
0.5 2.03891784399751
0.52 2.03891784928633
0.54 2.0389178468126
0.56 2.03891779823969
0.58 2.03891770081105
0.6 2.03891783816449
0.62 2.0389178338585
0.64 2.03891784409697
0.66 2.03891785187707
0.68 2.03891788635347
0.7 2.03891781796645
0.72 2.03891785175851
0.74 2.03891785742176
0.76 2.03891786499244
0.78 2.03891787541491
};
\addlegendentry{\small DL code (genie)}

\addplot [semithick, red, mark=*, mark size=2, mark options={solid}]
table {%
0.4 2.34785905471695
0.42 2.25196512829064
0.44 2.16638155059011
0.46 2.08961333679801
0.48 2.07251200670939
0.5 2.07251182404052
0.52 2.07251186953167
0.54 2.07251200776663
0.56 2.07251200434259
0.58 2.07251188948522
0.6 2.07251204189031
0.62 2.07251203737738
0.64 2.07251201616294
0.66 2.07251202565847
0.68 2.07251203854371
0.7 2.07251205361831
0.72 2.07251207229486
0.74 2.07251209383753
0.76 2.07251201536956
0.78 2.07251201367913
};
\addlegendentry{\small Cyclic code with $1$-bit CRC}

\addplot [semithick, blue, mark=asterisk, mark size=2, mark options={solid}]
table {%
0.4 2.38213814728847
0.42 2.27681397209042
0.44 2.18114089043521
0.46 2.14057130739321
0.48 2.14057131325517
0.5 2.14057132580013
0.52 2.14057131705983
0.54 2.14057135530139
0.56 2.14057135765611
0.58 2.14057134120409
0.6 2.14057132161735
0.62 2.1405712395049
0.64 2.14057131368301
0.66 2.14057121811755
0.68 2.14057127394095
0.7 2.14057139301608
0.72 2.14057133829108
0.74 2.14057136959427
0.76 2.14057132251526
0.78 2.14057132594468
};
\addlegendentry{\small Cyclic code (genie)}

\addplot [semithick, goldenrod1911910, mark=triangle*, mark size=2, mark options={solid}]
table {%
0.4 2.33503026693551
0.42 2.23475908465901
0.44 2.14646499751855
0.46 2.07046458348882
0.48 2.01559453828626
0.5 2.0082814701883
0.52 2.00828147873614
0.54 2.00828148134739
0.56 2.00828148718181
0.58 2.00828148116381
0.6 2.00828148057149
0.62 2.00828149832859
0.64 2.00828148757987
0.66 2.0082815182578
0.68 2.00828148419255
0.7 2.00828148639027
0.72 2.00828148906638
0.74 2.00828149227206
0.76 2.00828149561125
0.78 2.00828149843947
};
\addlegendentry{\small Using PPV expression (genie)}



\end{axis}

\end{tikzpicture}}
         \caption{Comparison of the optimal average RAoI in \eqref{eq:SRP} with respect to $\bar{d}$, assuming $\bar{d}_1=\bar{d}_2=\bar{d}$ and $\bar{P}=2$.}
         \label{fig:AoI_vs_avg_distortion_SRP}
     \end{subfigure}
     \hfill
     \begin{subfigure}{0.45\textwidth}
         \centering
         \scalebox{0.9}{
\begin{tikzpicture}

\definecolor{darkviolet1910191}{RGB}{191,0,191}
\definecolor{dimgray85}{RGB}{85,85,85}
\definecolor{gainsboro229}{RGB}{229,229,229}
\definecolor{goldenrod1911910}{RGB}{191,191,0}
\definecolor{green01270}{RGB}{0,127,0}

\begin{axis}[
axis background/.style={},
axis line style={black},
tick align=outside,
tick pos=left,
x grid style={gray!20},
 grid=both,
grid style={line width=.1pt, draw=gray!10},
major grid style={line width=.2pt,draw=gray!50},
xlabel=\textcolor{dimgray85}{Bound on Average Distortion, $\bar{d}$},
xmin=0.4, xmax=0.55,
xtick style={color=dimgray85},
y grid style={gray!20},
ylabel=\textcolor{dimgray85}{$A_{\rm DPP}^{\mathsf{D}, \mathsf{C}}$},
ymin=1.5, ymax=1.75,
ytick style={color=dimgray85},
legend style={at={(1,1)},anchor=north east}
]
\addplot [semithick, darkviolet1910191, mark=triangle*, mark size=2, mark options={solid}]
table {%
0.4 1.72775
0.42 1.669235
0.44 1.6113
0.46 1.56028
0.48 1.51561
0.5 1.51549
0.52 1.51573
0.54 1.51668
0.56 1.516275
0.58 1.51641
0.6 1.516125
0.62 1.515665
0.64 1.51529
0.66 1.51563
0.68 1.51518
0.7 1.514995
0.72 1.515485
0.74 1.51566
0.76 1.516025
0.78 1.51531
};
\addlegendentry{\small DL code with $1$-bit CRC}
\addplot [semithick, green01270, mark=square*, mark size=2, mark options={solid}]
table {%
0.4 1.743455
0.42 1.682395
0.44 1.624195
0.46 1.567565
0.48 1.52988
0.5 1.530075
0.52 1.529315
0.54 1.52919
0.56 1.529045
0.58 1.52955
0.6 1.527985
0.62 1.529545
0.64 1.528725
0.66 1.52853
0.68 1.52891
0.7 1.529355
0.72 1.52815
0.74 1.52833
0.76 1.52847
0.78 1.529625
};
\addlegendentry{\small DL code (genie)}

\addplot [semithick, red, mark=*, mark size=2, mark options={solid}]
table {%
0.4 1.730455
0.42 1.671675
0.44 1.616675
0.46 1.565095
0.48 1.555535
0.5 1.55555
0.52 1.55438
0.54 1.55508
0.56 1.55529
0.58 1.5548
0.6 1.554865
0.62 1.553695
0.64 1.555035
0.66 1.55423
0.68 1.5543
0.7 1.55486
0.72 1.555085
0.74 1.55484
0.76 1.55467
0.78 1.55441
};
\addlegendentry{\small Cyclic code with $1$-bit CRC}
\addplot [semithick, blue, mark=asterisk, mark size=2, mark options={solid}]
table {%
0.4 1.747215
0.42 1.687085
0.44 1.63033
0.46 1.607745
0.48 1.60601
0.5 1.60715
0.52 1.604865
0.54 1.609325
0.56 1.607545
0.58 1.60454
0.6 1.606835
0.62 1.607555
0.64 1.605165
0.66 1.608065
0.68 1.605755
0.7 1.607235
0.72 1.606205
0.74 1.605225
0.76 1.605595
0.78 1.60797
};
\addlegendentry{\small Cyclic code (genie)}

\addplot [semithick, goldenrod1911910, mark=triangle*, mark size=2, mark options={solid}]
table {%
0.4 1.722235
0.42 1.659665
0.44 1.60166
0.46 1.54878
0.48 1.51053
0.5 1.506635
0.52 1.50662
0.54 1.506895
0.56 1.50641
0.58 1.50536
0.6 1.506025
0.62 1.50633
0.64 1.50647
0.66 1.505925
0.68 1.506575
0.7 1.50671
0.72 1.506275
0.74 1.505965
0.76 1.50582
0.78 1.5065
};
\addlegendentry{\small Using PPV expression (genie)}

\end{axis}

\end{tikzpicture}}
         \caption{Comparison of the optimal average RAoI in \eqref{eq:dpp_upperbound_opt_problem} with respect to $\bar{d}$, assuming $\bar{d}_1=\bar{d}_2=\bar{d}$ and $\bar{P}=2$.}
         \label{fig:AoI_vs_avg_distortion_DPP}
     \end{subfigure}
     
    \caption{Comparison of the optimal average  RAoI with respect to bounds on average power and distortion. The word ``genie'' signifies perfect packet error detection at the destination.}
    \label{fig:AoI_power_and_distortion_graph}
\end{figure*}
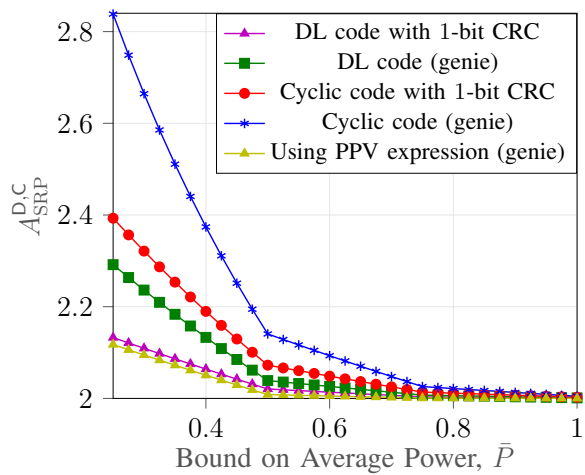
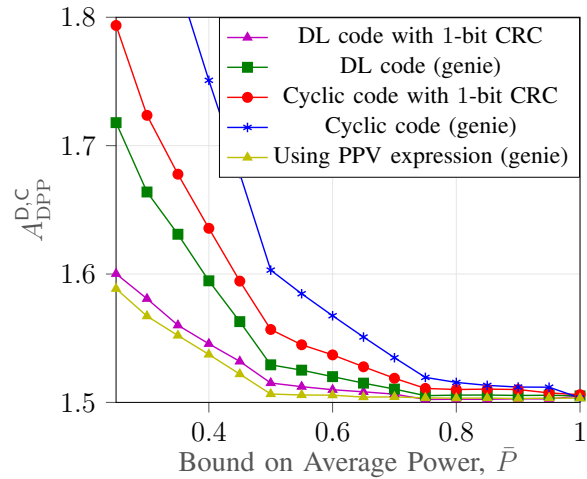
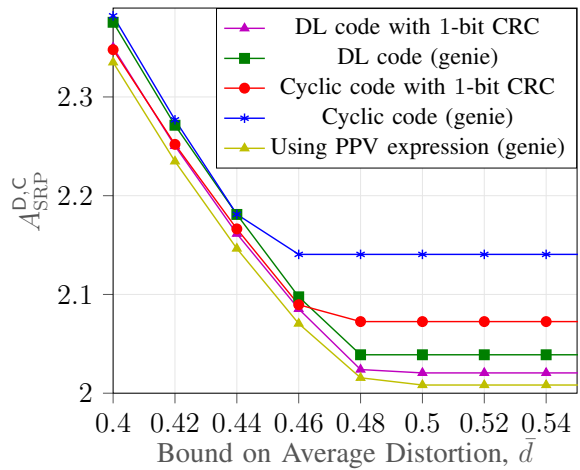
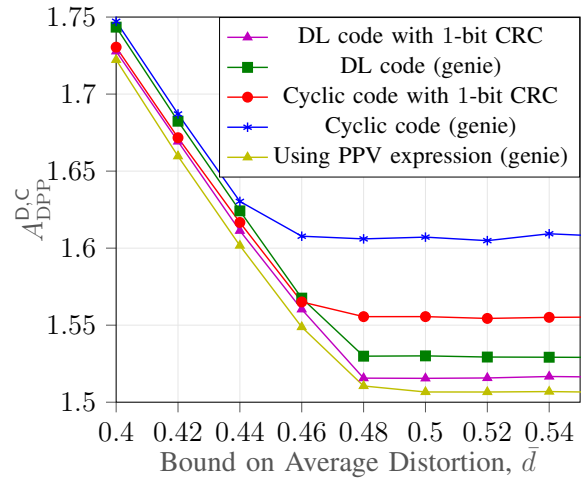
\subsubsection{Variation of Optimal Average RAoI with Power and Distortion Bounds}
Fig.~\ref{fig:AoI_power_and_distortion_graph} compares the optimal average RAoI in \eqref{eq:SRP} and \eqref{eq:dpp_problem} with respect to bounds on average power,  $\bar{P}$ and distortion,  $\bar{d}_1=\bar{d}_2=\bar{d}$.  From  Fig.~\ref{fig:AoI_vs_avg_power_SRP} and Fig.~\ref{fig:AoI_vs_avg_power_DPP}, we observe a decrease in RAoI as the value of $\bar{P}$ increases. 
This is because, higher $\bar{P}$
enables more frequent transmissions, and allocates more power for transmitting a fixed number of bits, reducing error probability. 
Due to this, the packets are successfully delivered frequently, reducing RAoI.  
In SRP case  in Fig.~\ref{fig:AoI_vs_avg_power_SRP}, the minimum achievable RAoI remains at $2$ even with unlimited power. This is due to the fact that with unlimited power, the transmissions can occur in every slot, and due to the  constraint of non-simultaneous transmission by both users, the transmission can occur with probability $0.5$ in each user (all the user parameters are considered to be identical for obtaining the result) and almost all transmissions will be successful, as transmissions occur with high power. With this, the RAoI can be computed to be $2$. 
However, in DPP, the minimum achievable RAoI goes to  $1.5$ at higher powers. This is because when the bound on the power is very large, it is optimal to transmit to users in an alternating manner, once every two slots.
Similarly, from Fig.~\ref{fig:AoI_vs_avg_distortion_SRP} and Fig.~\ref{fig:AoI_vs_avg_distortion_DPP}, we note a decrease in RAoI with increasing $\bar{d}$. This is due to the fact that as $\bar{d}$ increases, the number of transmitted message bits decreases. Transmitting fewer bits reduces the probability of error for a given power, potentially allowing for lower transmit power per transmission and increased transmission frequency, resulting in low RAoI.

Furthermore, based on Fig.~\ref{fig:AoI_power_and_distortion_graph}, we draw the following inferences. Including a $1$-bit CRC polynomial in any error correcting code, whether DL-based or classical cyclic code, leads to a lower reported RAoI. 
This is because, the $1$-bit CRC is unable to detect all errors and may incorrectly indicate absence of errors even when errors are present. Consequently, when the CRC indicates absence of errors, the RAoI is reset, falsely assuming successful delivery of the packet. Furthermore, in our adopted method, DL codes jointly design coding and modulation, whereas cyclic codes use binary phase shift keying modulation, which has fewer degrees of freedom. Hence, the RAoI achieved by the DL codes is lower. Utilizing the expression from \cite{PPV}, referred to as PPV, yields lower $\epsilon(k, P)$ values compared to DL and cyclic codes, resulting in the lowest computed RAoI.  From this, we can conclude that the  existing finite-blocklength AoI optimization literature, especially those which use the bounds in PPV \cite{PPV},  significantly underestimate actual AoI, as they assume perfect error detection without adopting CRC.

  \begin{figure*}[t]
     \centering
     \begin{subfigure}{0.48\textwidth}
         \centering
         \scalebox{1}{\begin{tikzpicture}[scale=0.9]
\definecolor{brown}{RGB}{165,42,42}
\definecolor{dimgray85}{RGB}{85,85,85}
\definecolor{gainsboro229}{RGB}{229,229,229}
\definecolor{gray}{RGB}{128,128,128}
\definecolor{green}{RGB}{0,128,0}
\definecolor{orange}{RGB}{255,165,0}
\definecolor{pink}{RGB}{255,192,203}
\definecolor{purple}{RGB}{128,0,128}
\begin{axis}[xlabel=,
    ylabel=$A_{\rm SRP}^{\mathsf{D}, \mathsf{C}}$,xmin=0.7, xmax=3.4, 
    xtick={0.8,3.2},
    xticklabels={$1$-bit CRC, $3$-bit CRC},
    legend style={at={(0.9,1)},anchor=north east}]
    \addplot[
        scatter,only marks,scatter src=explicit symbolic,
        scatter/classes={
            a={mark=square*,blue},
            b={mark=triangle*,red},
            c={mark=o,draw=black,fill=black},
            d={mark=triangle*,black},
            e={mark=square*,brown},
            f={mark=triangle*,orange},
            g={mark=o,draw=purple,fill=purple},
            h={mark=square*,black}
        }
    ]
    table[x=x,y=y,meta=label]{
        x    y    label
        0.8   2.08160952 a
        3.2   2.10555685 b
        0.8   2.04269493 c
        3.2   2.09306834 d
         0.8    2.02293254 e
        3.2   2.07714795 f
         0.8    2.01185243 g
        3.2   2.06788077 h
        
    };
    \legend{\footnotesize Cyclic code ($1$-bit CRC + genie),\footnotesize Cyclic code ($3$-bit CRC + genie),\footnotesize Cyclic code ($1$-bit CRC),\footnotesize Cyclic code ($3$-bit CRC), \footnotesize DL code ($1$-bit CRC + genie),\footnotesize DL code ($3$-bit CRC + genie),\footnotesize DL code ($1$-bit CRC), \footnotesize DL code ($3$-bit CRC)}
\end{axis}
\end{tikzpicture}}
         \caption{Under SRP}
         \label{fig:SRP-CRC}
     \end{subfigure}
     \hfill
     \begin{subfigure}{0.48\textwidth}
         \centering
         \scalebox{1}{\begin{tikzpicture}[scale=0.9]
\definecolor{brown}{RGB}{165,42,42}
\definecolor{dimgray85}{RGB}{85,85,85}
\definecolor{gainsboro229}{RGB}{229,229,229}
\definecolor{gray}{RGB}{128,128,128}
\definecolor{green}{RGB}{0,128,0}
\definecolor{orange}{RGB}{255,165,0}
\definecolor{pink}{RGB}{255,192,203}
\definecolor{purple}{RGB}{128,0,128}
\begin{axis}[xlabel=,
    ylabel=$A_{\rm DPP}^{\mathsf{D}, \mathsf{C}}$,xmin=0.7, xmax=3.4, 
    xtick={0.8,3.2},
    xticklabels={$1$-bit CRC, $3$-bit CRC},
    legend style={at={(0.9,1)},anchor=north east}]
    \addplot[
        scatter,only marks,scatter src=explicit symbolic,
        scatter/classes={
            a={mark=square*,blue},
            b={mark=triangle*,red},
            c={mark=o,draw=black,fill=black},
            d={mark=triangle*,black},
            e={mark=square*,brown},
            f={mark=triangle*,orange},
            g={mark=o,draw=purple,fill=purple},
            h={mark=square*,black}
        }
    ]
    table[x=x,y=y,meta=label]{
        x    y    label
        0.8   1.607 a
        3.2   1.6389 b
        0.8   1.5325 c
        3.2   1.6218 d
         0.8    1.5166 e
        3.2   1.5952 f
         0.8    1.5079 g
        3.2   1.585 h
        
    };
    \legend{\footnotesize Cyclic code ($1$-bit CRC + genie),\footnotesize Cyclic code ($3$-bit CRC + genie),\footnotesize Cyclic code ($1$-bit CRC),\footnotesize Cyclic code ($3$-bit CRC),\footnotesize DL code ($1$-bit CRC + genie),\footnotesize DL code ($3$-bit CRC + genie),\footnotesize DL code ($1$-bit CRC),\footnotesize DL code ($3$-bit CRC)}
\end{axis}
\end{tikzpicture}}
         \caption{Under DPP}
         \label{fig:DPP-CRC}
     \end{subfigure}
    \caption{Optimal average RAoI achieved with $1$-bit and $3$-bit CRCs. We observe that longer CRC codes lead to increased RAoI, as they require more power and channel resources, and because they have better error detection ability.}
    \label{fig:AoI_vs_crc1_crc3_plot}
\end{figure*}
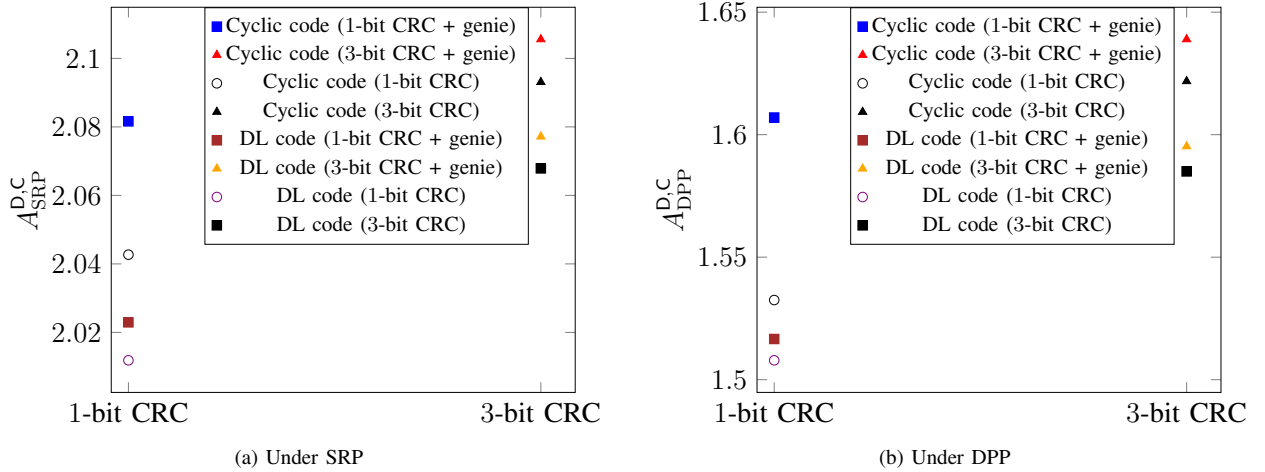

 \begin{figure*}[t]
     \centering
     \begin{subfigure}{0.45\textwidth}
         \centering
         \scalebox{0.85}{\input{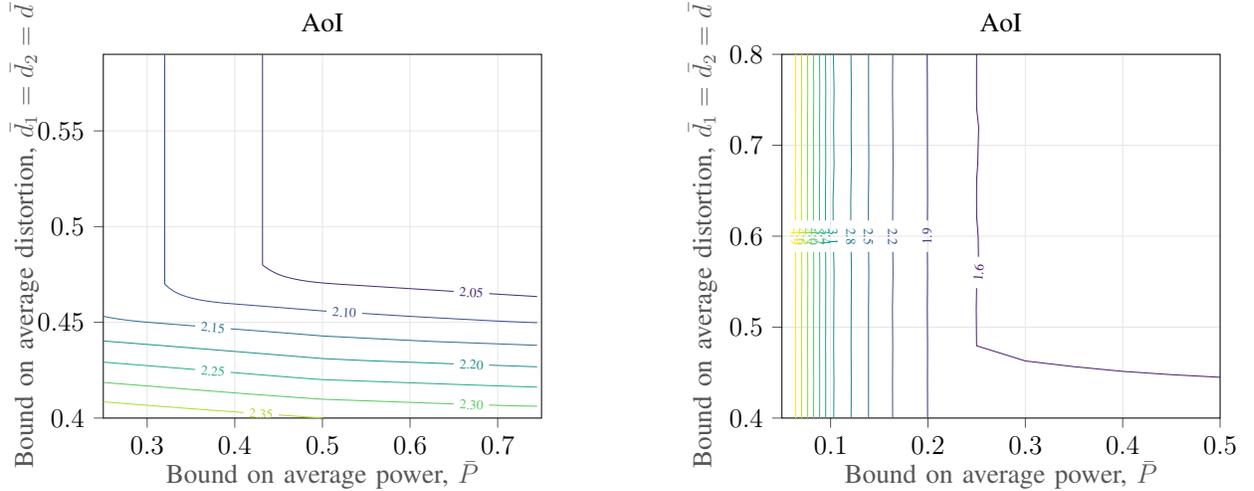}}
         \caption{Contour plot showing RAoI variation under SRP for DL code with $1$-bit CRC, for different values of $\bar{P}$ and $\bar{d}_1=\bar{d}_2=\bar{d}$.}
         \label{fig:AoI_cyclic_p_d_contour_SRP}
     \end{subfigure}
     \hfill
     \begin{subfigure}{0.45\textwidth}
         \centering
         \scalebox{0.85}{
\begin{tikzpicture}

\definecolor{darkcyan32144140}{RGB}{32,144,140}
\definecolor{darkgray176}{RGB}{176,176,176}
\definecolor{darkslateblue48103141}{RGB}{48,103,141}
\definecolor{darkslateblue5882139}{RGB}{58,82,139}
\definecolor{dimgray85}{RGB}{85,85,85}
\definecolor{darkslateblue6857130}{RGB}{68,57,130}
\definecolor{gold25323136}{RGB}{253,231,36}
\definecolor{greenyellow19922431}{RGB}{199,224,31}
\definecolor{indigo68184}{RGB}{68,1,84}
\definecolor{indigo7230112}{RGB}{72,30,112}
\definecolor{mediumseagreen32164133}{RGB}{32,164,133}
\definecolor{mediumseagreen53183120}{RGB}{53,183,120}
\definecolor{mediumseagreen9120098}{RGB}{91,200,98}
\definecolor{teal40123142}{RGB}{40,123,142}
\definecolor{yellowgreen14421467}{RGB}{144,214,67}

\begin{axis}[
axis background/.style={},
axis line style={black},
tick align=outside,
tick pos=left,
title={AoI},
x grid style={gray!20},
 grid=both,
 grid style={line width=.1pt, draw=gray!10},
major grid style={line width=.2pt,draw=gray!50},
xlabel=\textcolor{dimgray85}{Bound on average power, $\bar{P}$},
xmin=0.05, xmax=0.5,
xtick style={color=dimgray85},
y grid style={gray!20},
ylabel=\textcolor{dimgray85}{Bound on average distortion, $\bar{d}_1=\bar{d}_2=\bar{d}$},
ymin=0.4, ymax=0.8,
ytick style={color=dimgray85}
]
\addplot [draw=indigo7230112]
table{%
x  y
0.5 0.444860085301155
0.45 0.447726075504829
0.4 0.45135847818957
0.35 0.456513271556463
0.3 0.462812949640288
0.25 0.47948748368727
0.249907338557456 0.48
0.249602153584585 0.52
0.249913568483129 0.542895482853349
};
\addplot [draw=indigo7230112]
table{%
x  y
0.251838314057578 0.577381576287451
0.251544846668204 0.6
0.25 0.621612903225802
0.249902796725784 0.64
0.25 0.661923076923081
0.25112763915547 0.68
0.252068437180797 0.72
0.25 0.740571428571426
0.249871468175072 0.76
0.24993147751606 0.8
};
\addplot [draw=darkslateblue6857130]
table{%
x  y
0.199589296603178 0.4
0.199451212229786 0.44
0.199558122915674 0.48
0.199288625151927 0.52
0.199540639797834 0.56
0.199375586765238 0.582617000646154
};
\addplot [draw=darkslateblue6857130]
table{%
x  y
0.199389449746289 0.617382847527972
0.199572542715961 0.64
0.199474112417314 0.68
0.199365467252208 0.72
0.199356484019353 0.76
0.199103299856528 0.8
};
\addplot [draw=darkslateblue5882139]
table{%
x  y
0.163720004304315 0.4
0.163621760420399 0.44
0.163826822296332 0.48
0.163541145349253 0.52
0.164114403126956 0.56
0.163894436524726 0.582752770246714
};
\addplot [draw=darkslateblue5882139]
table{%
x  y
0.163781244072893 0.617248258343901
0.163851878311603 0.64
0.164100508200781 0.68
0.163949685980073 0.72
0.163605596205639 0.76
0.163823294211727 0.8
};
\addplot [draw=darkslateblue48103141]
table{%
x  y
0.138958952933253 0.4
0.138989779967315 0.44
0.139001058568122 0.48
0.138942290576382 0.52
0.139266002117359 0.56
0.139058087568658 0.582752648036605
};
\addplot [draw=darkslateblue48103141]
table{%
x  y
0.138887919590287 0.617248370115088
0.138871350840412 0.64
0.139272633218775 0.68
0.13908133977038 0.72
0.139025182490373 0.76
0.139004959534298 0.8
};
\addplot [ draw=teal40123142]
table{%
x  y
0.121078714768481 0.4
0.121226107861017 0.44
0.121059225989343 0.48
0.121141728078607 0.52
0.121423091105904 0.56
0.12107498717439 0.582686882507097
};
\addplot [ draw=teal40123142]
table{%
x  y
0.120759286475589 0.617315914385868
0.120693719022286 0.64
0.121434814665065 0.68
0.121067216697891 0.72
0.121307670501524 0.76
0.120926462701048 0.8
};
\addplot [draw=darkcyan32144140]
table{%
x  y
0.103198476603708 0.4
0.103462435754719 0.44
0.103117393410563 0.48
0.103341165580832 0.52
0.103580180094448 0.56
0.103091225072079 0.582689701123047
};
\addplot [draw=darkcyan32144140]
table{%
x  y
0.102630703618713 0.617315702799436
0.102516087204159 0.64
0.103596996111356 0.68
0.103053093625402 0.72
0.103590158512674 0.76
0.102847965867798 0.8
};
\addplot [draw=mediumseagreen32164133]
table{%
x  y
0.0949388037714218 0.4
0.0950251394979516 0.44
0.0949063361787298 0.48
0.0949831288690887 0.52
0.0950809331673166 0.56
0.0949020480570266 0.582752381916842
};
\addplot [draw=mediumseagreen32164133]
table{%
x  y
0.0947297001621264 0.61724832093583
0.0946812341553353 0.64
0.0950430695487321 0.68
0.0948632495196312 0.72
0.0950732088093605 0.76
0.094815706722515 0.8
};
\addplot [draw=mediumseagreen53183120]
table{%
x  y
0.0887750070370013 0.4
0.0888458282439839 0.44
0.0887415388156191 0.48
0.0888069979680529 0.52
0.0889270996367187 0.56
0.0887453111175842 0.582752406736358
};
\addplot [draw=mediumseagreen53183120]
table{%
x  y
0.0885645927593103 0.617248300663335
0.0885079881802917 0.64
0.0888341004859553 0.68
0.0886782436232589 0.72
0.0888943811801149 0.76
0.0886619994215515 0.8
};
\addplot [draw=mediumseagreen9120098]
table{%
x  y
0.0826112103025808 0.4
0.0826665169900163 0.44
0.0825767414525085 0.48
0.0826308670670172 0.52
0.0827732661061208 0.56
0.0825885741685651 0.582752431955331
};
\addplot [draw=mediumseagreen9120098]
table{%
x  y
0.0823994853799342 0.617248277250407
0.0823347422052481 0.64
0.0826251314231785 0.68
0.0824932377268866 0.72
0.0827155535508693 0.76
0.0825082921205881 0.8
};
\addplot [ draw=yellowgreen14421467]
table{%
x  y
0.0764474135681603 0.4
0.0764872057360486 0.44
0.0764119440893978 0.48
0.0764547361659815 0.52
0.0766194325755229 0.56
0.0764318372098167 0.582752457573753
};
\addplot [draw=none, draw=yellowgreen14421467]
table{%
x  y
0.0762343780273679 0.617248250697084
0.0761614962302044 0.64
0.0764161623604017 0.68
0.0763082318305143 0.72
0.0765367259216236 0.76
0.0763545848196246 0.8
};
\addplot [draw=greenyellow19922431]
table{%
x  y
0.0702836168337398 0.4
0.070307894482081 0.44
0.0702471467262871 0.48
0.0702786052649457 0.52
0.070465599044925 0.56
0.070275666545447 0.582684846272372
};
\addplot [draw=greenyellow19922431]
table{%
x  y
0.0700690298345906 0.617315861085777
0.0699882502551608 0.64
0.0702071932976248 0.68
0.0701232259341421 0.72
0.070357898292378 0.76
0.0702008775186611 0.8
};
\addplot [draw=gold25323136]
table{%
x  y
0.0641198200993193 0.4
0.0641285832281134 0.44
0.0640823493631764 0.48
0.06410247436391 0.52
0.0643117655143271 0.56
0.0641189381964423 0.582684872793276
};
\addplot [draw=gold25323136]
table{%
x  y
0.063903898350669 0.617315828123041
0.0638150042801172 0.64
0.063998224234848 0.68
0.0639382200377698 0.72
0.0641790706631324 0.76
0.0640471702176976 0.8
};
\draw (axis cs:0.252063834893208,0.56) node[
  scale=0.5,
  text=indigo7230112,
  rotate=84.5
]{1.6};
\draw (axis cs:0.199248730123021,0.6) node[
  scale=0.5,
  text=darkslateblue6857130,
  rotate=90.0
]{1.9};
\draw (axis cs:0.1637276957505,0.6) node[
  scale=0.5,
  text=darkslateblue5882139,
  rotate=270.2
]{2.2};
\draw (axis cs:0.13890048062137,0.6) node[
  scale=0.5,
  text=darkslateblue48103141,
  rotate=270.3
]{2.5};
\draw (axis cs:0.120809337442033,0.6) node[
  scale=0.5,
  text=teal40123142,
  rotate=270.6
]{2.8};
\draw (axis cs:0.102718194262695,0.6) node[
  scale=0.5,
  text=darkcyan32144140,
  rotate=270.9
]{3.1};
\draw (axis cs:0.0947664428235034,0.6) node[
  scale=0.5,
  text=mediumseagreen32164133,
  rotate=270.3
]{3.4};
\draw (axis cs:0.0886075052812889,0.6) node[
  scale=0.5,
  text=mediumseagreen53183120,
  rotate=270.4
]{3.7};
\draw (axis cs:0.0824485677390746,0.6) node[
  scale=0.5,
  text=mediumseagreen9120098,
  rotate=270.4
]{4.0};
\draw (axis cs:0.0762896301968602,0.6) node[
  scale=0.5,
  text=yellowgreen14421467,
  rotate=270.4
]{4.3};
\draw (axis cs:0.0701306926546458,0.6) node[
  scale=0.5,
  text=greenyellow19922431,
  rotate=270.4
]{4.6};
\draw (axis cs:0.0639717551124314,0.6) node[
  scale=0.5,
  text=gold25323136,
  rotate=270.4
]{4.9};
\end{axis}

\end{tikzpicture}}
         \caption{Contour plot showing RAoI variation under DPP for DL code with $1$-bit CRC, for different values of $\bar{P}$ and $\bar{d}_1=\bar{d}_2=\bar{d}$.}
         \label{fig:AoI_cyclic_p_d_contour_DPP}
     \end{subfigure}
    \caption{Average RAoI for different values of power and distortion bounds}
    \label{fig:AoI_cyclic_p_d_contour}
\end{figure*}
\subsubsection{Optimal average RAoI with $1$-bit and $3$-bit CRCs}
In Fig.~\ref{fig:AoI_vs_crc1_crc3_plot}, we compare the optimal average RAoI achieved under SRP and DPP using $1$-bit and $3$-bit CRCs. In the legend, ``$1$-bit CRC + genie'' refers to the case where CRC bits are transmitted and assumed to detect all errors. This is done to enable a fair comparison with the case where CRC bits are used, while also considering the power allocation to the CRC bits in the genie aided case. From Fig.~\ref{fig:SRP-CRC} and Fig.~\ref{fig:DPP-CRC}, we  observe that as the number of CRC bits increases, the average RAoI also increases. However, importantly, we observe a decrease in the gap between the reported RAoI and true genie-aided AoI.

\subsubsection{Average RAoI for different power and distortion bounds}
Fig. \ref{fig:AoI_cyclic_p_d_contour} presents a contour plot illustrating the optimal achievable average AoI under \eqref{eq:SRP} for varying average power, $\bar{P}$, and distortion, $\bar{d}$. As observed in Fig. \ref{fig:AoI_power_and_distortion_graph}, the AoI decreases when $\bar{d}$ is increased for a fixed $\bar{P}$, and vice versa. Moreover, we observe that maintaining the same AoI requires increasing average distortion, $\bar{d}$ when $\bar{P}$ is decreased, and vice versa.


\section{Conclusions and Future Work}
In this paper, we considered a problem of optimal allocation of message-word length, transmit power, and user scheduling in a  broadcast channel. We  minimized the long-term reported AoI (RAoI), subject to average power and distortion constraints, and a constraint that information can be transmitted to at most a single user at any time. 
We made the following contributions:
(i) we considered practical cyclic and DL-based error correcting   and CRC-based error detecting codes, and numerically obtained the relationship between error probability, message-word length, and transmit power,  (ii) using which we obtained age-agnostic SRP and age-aware DPP policies for scheduling and resource allocation, and (iii) we carried out extensive simulations and derived meaningful inferences, including the following: DL-based codes   achieve a lower RAoI than classical codes because they can be designed with decent performance for any message-word length, while classical codewords may not be able to achieve good performance for all lengths.
Longer CRC codes lead to increased RAoI, as they require more power and channel resources, and because they have better error detection ability. We finally conclude that the existing AoI literature underestimates true AoI by assuming perfect error detection without CRC.

\section{APPENDIX}
\subsection{Derivation of an Upperbound on Drift, $\theta(\textbf{S}(t-1))$}
Below, we obtain an upper bound on drift $\theta(\textbf{S}(t-1))$. 
Consider
\begin{flalign}
&L(Q_{1}(t))-L(Q_{1}(t-1))=\frac{V_1}{2} \left(Q_{1}^2(t)-Q_{1}^2(t-1)\right) \nonumber \\
& =\frac{V_1}{2}\left(\max \left\{Q_{1}(t-1)-\bar{P}, 0\right\}+ \sum_{i=1}^M u_i(t) P_i(t)\right)^2\nonumber \\
&\quad -\frac{V_1}{2} Q_{1}^2(t-1) \nonumber \\
& \overset{(a)}{\leq} \frac{V_1}{2}\left(\bar{P}^2+\left(\sum_{i=1}^M u_i(t) P_i(t)\right)^2\right) + V_1Q_{1}(t-1)\left(\sum_{i=1}^M u_i(t) P_i(t)-\bar{P}\right), \label{upb_power}
\end{flalign}
where 
(a) holds true because $V_1$ is a strictly positive quantity,  $Q_{1}(t),\bar{P},P_i(t)\geq 0$ and due to the following inequality $(\max\{a-b, 0\}+c)^2\leq  a^2+b^2+c^2+2a(c-b)$, for any $a,b,c\geq 0$. Taking expectation on both sides of \eqref{upb_power}, we have
\begin{flalign}
    &\mathbb{E}\left[L\left(Q_{1}(t)\right)-L\left(Q_{1}(t-1)\right) \mid \textbf{S}(t-1)\right] \nonumber\\ &\leq 
    \frac{V_1}{2}\bar{P}^2 + \frac{V_1}{2}\mathbb{E}\left[\left(\sum_{i=1}^Mu_i(t) P_i(t)\right)^2|\textbf{S}(t-1)\right] \nonumber \\ &\;\;\;\;+ V_1Q_{1}(t-1) \sum_{i=1}^M\mathbb{E}\left[u_i(t) P_i(t)|\textbf{S}(t-1)\right]-\bar{P} \nonumber \\
    &\leq MB_1 -  V_1Q_{1}(t-1)\left(\bar{P}-\sum_{i=1}^M\mathbb{E}\left[ u_i(t) P_i(t)|\textbf{S}(t-1)\right]\right), \label{upb_drift_power}
\end{flalign}
where $B_1 = {V_1}/{(2M)}  \left(
\bar{P}^2 + (\max \sum_{i=1}^M u_i(t)P_i(t))^2\right)$.
Along the similar lines, we can obtain the following: 
\begin{flalign}
    &\mathbb{E}\left[L\left(\textbf{Q}_{2}(t)\right)-L\left(\textbf{Q}_{2}(t-1)\right) \mid \textbf{S}(t-1)\right] \nonumber\\
    & \leq \frac{V_2}{2}\sum_{i=1}^M \bar{d}_i^2+ \frac{V_2}{2}\mathbb{E}\left[(u_i(t)v_i(t)d_i(t))^2|\textbf{S}(t-1)\right] \nonumber\\
     & +V_2 Q_{i, 2}(t-1)\mathbb{E}\left[(u_i(t)v_i(t)d_i(t)-\bar{d}_i)\mid \textbf{S}(t-1)\right] \nonumber \\
    &\leq MB_2 - \sum_{i=1}^M V_2Q_{i,2}(t-1)\mathbb{E}\left[\bar{d}_i-u_i(t)\epsilon(k_i(t),P_i(t))\mid \textbf{S}(t-1)\right], \label{upb_drift_dist} 
\end{flalign}
where $B_2 = {V_2}/{(2M)} \sum_{i=1}^M \bar{d}_i^2 + (\max\{u_i(t)v_i(t)d_i(t)\})^2$ and  via the law of iterated expectations, we have,
\begin{align}
    &\mathbb{E}_{A,V}\left[V_2Q_{i, 2}(t-1)\left(u_i(t)v_i(t)d_i(t)-\bar{d}_i\right) \mid \textbf{S}(t-1)\right] \nonumber \\
    &\overset{(a)}{=}\mathbb{E}_A\left[\mathbb{E}_V\left[V_2Q_{i, 2}(t-1)\left(u_i(t)v_i(t)d_i(t)-\bar{d}_i\right) \right.\right. \nonumber \\
    & \quad \left. \left.| u_i(t),k_i(t),P_i(t) \right]|\textbf{S}(t-1)\right]\nonumber \\
    &= V_2Q_{i,2}(t-1)\left(\mathbb{E}_A\left[u_i(t)\epsilon(k_i(t),P_i(t))d_i(t)\mid \textbf{S}(t-1)\right]-\bar{d}_i\right),\nonumber
\end{align}
where $A$ represents the instantaneous (possibly) random action and $V$ is the indicator random variable representing the success or failure of a packet, and (a) follows because \newline  $\mathbb{E}_V\left[v_i(t)\mid u_i(t), k_i(t),P_i(t)\right]=\epsilon(k_i(t),P_i(t))$. Using the bounds obtained in \eqref{upb_drift_power} and \eqref{upb_drift_dist}, 
we can then write the following upper-bound: 
\begin{align}
   &\theta(\textbf{S}(t-1))\leq M(B_1+B_2) \nonumber \\ 
   &-  V_1Q_{1}(t-1)\left(\bar{P}-\mathbb{E}\left[\sum_{i=1}^M u_i(t) P_i(t)|\textbf{S}(t-1)\right] \right) \nonumber\\
& -\sum_{i=1}^M V_2Q_{i,2}(t-1)\left(\bar{d}_i-\mathbb{E}[u_i(t)\epsilon(k_i(t),P_i(t))d_i(t)\mid \textbf{S}(t-1)]\right). \label{upper_bound_dpp_drift}
\end{align}

\subsection{Proof of Theorem~\ref{thm:DPP}}
To prove the first part of the theorem, which says that the DPP policy satisfies average power and distortion constraints, it is sufficient to prove the below lemma \cite{Neely}. 
\begin{lemma}\label{lma:strong_stability_lemma}
	The virtual queues $Q_{1}(t)$ and $Q_{i,2}(t)$  are strongly stable, i.e.,
	\begin{align} \label{queue-stability}
            \lim_{T \to \infty}\frac{1}{T}\sum_{t=1}^{T} \mathbb{E}[Q_{1}(t-1)]<\infty \text{ and }
		\lim_{T \to \infty}\frac{1}{T}\sum_{t=1}^{T}\mathbb{E}[Q_{i,2}(t-1)]<\infty, \;\; \forall i \in \{1,2,\dots, M\}.   
	\end{align}
\end{lemma}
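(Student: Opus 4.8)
The plan is to establish strong stability of the virtual queues $Q_1(t)$ and $Q_{i,2}(t)$ by invoking the standard Lyapunov drift argument from \cite{Neely}, using the drift bound \eqref{upper_bound_dpp_drift} derived in Appendix A together with the fact that the DPP policy minimizes the drift-plus-penalty expression \eqref{eq:dpp_upperbound_opt_problem} in every slot. First I would add the penalty term $\sum_i \beta_i\Delta_i(t)$ to both sides of \eqref{upper_bound_dpp_drift} to obtain an upper bound on $\theta(\textbf{S}(t-1)) + \sum_i\beta_i\mathbb{E}[\Delta_i(t)\mid \textbf{S}(t-1)]$. Since the DPP algorithm picks the action minimizing (an expression equivalent to) this upper bound, its value under DPP is no larger than the value achieved by any other feasible policy — in particular, the stationary randomized policy that satisfies the power and distortion constraints with strict slack $\delta>0$ (such a policy exists whenever the constraints in \eqref{eq:main_opt_problem} are feasible with some slack, which I would assume, or handle the boundary case by a limiting argument).

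Next I would plug that $\delta$-slack randomized policy into the right-hand side of the drift-plus-penalty bound. Because under that policy $\mathbb{E}[\sum_i u_i(t)P_i(t)] \le \bar{P}-\delta$ and $\mathbb{E}[u_i(t)\epsilon_i(k_i,P_i)d_i] \le \bar{d}_i - \delta$, the terms multiplying $V_1 Q_1(t-1)$ and $V_2 Q_{i,2}(t-1)$ become $-V_1\delta Q_1(t-1)$ and $-V_2\delta Q_{i,2}(t-1)$ respectively, while the penalty term $\sum_i\beta_i\mathbb{E}[\Delta_i(t)]$ is bounded above by a finite constant, call it $\beta_{\max}$ times the maximal reachable RAoI — but more carefully, since $\Delta_i(t)$ can in principle grow, I would instead bound $\mathbb{E}[\Delta_i(t)\mid \textbf{S}(t-1)] \le \Delta_i(t-1)+1$ deterministically and absorb the $\Delta_i(t-1)$ term; alternatively, and more cleanly, note that the DPP objective value is bounded because the penalty under any fixed-action comparison policy with bounded parameters yields $\mathbb{E}[\Delta_i(t)\mid\textbf{S}(t-1)] = (\Delta_i(t-1)+1)(1-\epsilon_i) + (n_i/N)\epsilon_i \le \Delta_i(t-1)+1$, so the penalty contributes at most $\sum_i\beta_i(\Delta_i(t-1)+1)$, which cancels against the same quantity appearing on the left under the DPP action. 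After this cancellation I obtain, for the DPP policy, a drift bound of the form $\theta(\textbf{S}(t-1)) \le M(B_1+B_2) + C - V_1\delta\, Q_1(t-1) - V_2\delta\sum_i Q_{i,2}(t-1)$ for a finite constant $C$.

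From here the conclusion is routine: take total expectation, sum the telescoping Lyapunov terms over $t=1,\dots,T$, use $L(\cdot)\ge 0$ and the zero initial conditions, divide by $T$, and rearrange to get $\frac{1}{T}\sum_{t=1}^T\big(V_1\delta\,\mathbb{E}[Q_1(t-1)] + V_2\delta\sum_i\mathbb{E}[Q_{i,2}(t-1)]\big) \le M(B_1+B_2)+C + \frac{\mathbb{E}[L(\textbf{S}(0))]}{T}$. Letting $T\to\infty$ and dividing by $V_1\delta$ (resp.\ $V_2\delta$) yields the finiteness of each time-average in \eqref{queue-stability}, since each queue's time-average expectation is nonnegative and their weighted sum is bounded. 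The main obstacle I anticipate is handling the penalty term $\sum_i\beta_i\Delta_i(t)$ correctly — because $\Delta_i(t)$ is itself an unbounded state variable rather than a bounded per-slot cost, one must be careful that it does not destroy the negative-drift structure; the resolution is the observation above that $\Delta_i(t-1)+1$ appears on \emph{both} sides (DPP minimizes the bound, the comparison policy evaluates it) and cancels, so only the reset contribution $\epsilon_i n_i/N \le n_i/N$ and the constant $\sum_i\beta_i$ survive, keeping $C$ finite. A secondary technical point is justifying the existence of the $\delta$-slack policy, which I would state as a mild feasibility assumption on $(\bar{P},\{\bar{d}_i\})$ or reduce to it by continuity.
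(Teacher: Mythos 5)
Your overall route is the same as the paper's: bound the drift-plus-penalty, exploit the fact that the DPP action minimizes that bound to compare against a stationary randomized policy satisfying the power and distortion constraints with strict slack $\delta$, then Ces\`{a}ro-average and telescope. (Your explicit $\delta$-slack feasibility assumption is in fact a point the paper leaves implicit when it asserts that ``both the terms in $LHS_1$ are positive.'') However, there is one step in the middle that does not hold as written: the claimed per-slot bound $\theta(\textbf{S}(t-1)) \le M(B_1+B_2) + C - V_1\delta\, Q_1(t-1) - V_2\delta\sum_i Q_{i,2}(t-1)$ with a \emph{finite constant} $C$. After subtracting the DPP penalty from both sides, the residual is $\sum_i\beta_i\bigl(\mathbb{E}[\Delta_i^{\omega}(t)\mid\textbf{S}(t-1)]-\mathbb{E}[\Delta_i^{\rm DPP}(t)\mid\textbf{S}(t-1)]\bigr)$; since $\mathbb{E}[\Delta_i^{\omega}(t)\mid\textbf{S}(t-1)]$ can be as large as $\Delta_i(t-1)+1$ while $\mathbb{E}[\Delta_i^{\rm DPP}(t)\mid\textbf{S}(t-1)]$ is only guaranteed to be at least $n_i/N$, this residual is of order $\beta_i\,\Delta_i(t-1)$ and is \emph{state-dependent and unbounded}. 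The ``cancellation'' you invoke is not a per-slot cancellation: the two sides carry the success probabilities of two different policies, so what survives is $\sum_i\beta_i(\Delta_i(t-1)+1-n_i/N)(p_i^{\rm DPP}-p_i^{\omega})$, not a constant.

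The fix is exactly what the paper does: do not collapse the age terms into a constant before summing. Keep $\mathbb{E}[\Delta_i(t)]$ on the left and $\mathbb{E}[\Delta_i(t-1)]+1$ on the right, sum over $t=1,\dots,T$, and telescope so that only $\mathbb{E}[\Delta_i(T)]-\Delta_i(0)-T$ survives (contributing $O(1)$ per slot after dividing by $T$, with the $-\mathbb{E}[\Delta_i(T)]$ term having the favorable sign); the leftover nonnegative age term (the paper's $LHS_2$, proportional to $\sum_t\mathbb{E}[\Delta_i(t-1)]$ with a nonnegative coefficient) is then simply dropped from the left-hand side. With that repair, your final paragraph (telescoping $L(\cdot)\ge 0$ from zero initial conditions, dividing by $T$ and by $V_1\delta$ or $V_2\delta$) goes through and yields \eqref{queue-stability} exactly as in the paper.
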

\begin{proof}

From \eqref{upper_bound_dpp_drift} and \eqref{eq:main_dpp_problem}, we have
\begin{align}
   &{\theta}(\textbf{S}(t-1))+   \sum_{i=1}^M \beta_i \mathbb{E}\left[{\Delta}_i(t)|\textbf{S}(t-1)\right]  \nonumber\\
   &\leq M(B_1+B_2)- V_1Q_{1}(t-1)\left(\bar{P}-\sum_{i=1}^M\mathbb{E}\left[u_i(t){P}_i(t)|\textbf{S}(t-1)\right]\right)\nonumber\\
   &-\sum_{i=1}^M V_2Q_{i,2}(t-1)\left(\bar{d}_i-\mathbb{E}\left[u_i(t)\epsilon(k_i(t),P_i(t))d_i(t)|\textbf{S}(t-1)\right]\right) +  \sum_{i=1}^M \beta_i \mathbb{E}\left[\Delta_i(t)|\textbf{S}(t-1)\right]. \label{upb_q1_q2}
\end{align}
The DPP policy in Algorithm~\ref{algo:DPP}  minimizes the upperbound  \eqref{upb_q1_q2} on the DPP term, ${\theta}(\textbf{S}(t-1))+   \sum_{i=1}^M \beta_i \mathbb{E}\left[{\Delta}_i(t)|\textbf{S}(t-1)\right]$. Hence, a stationary randomized policy $\omega$, that  selects User $i$ with power $P$ and message-word length $k$,  with probability $\mu_i(k, P)$ in slot $t$, we have, 
\begin{align}
    &\sum_{i=1}^M\mathbb{E}\left[ u_i(t){P}_i(t)|\textbf{S}(t-1)\right] \leq \sum_{i=1}^M \sum_{k,P}P\mu_i(k,P),\nonumber \\    &\mathbb{E}\left[u_i(t)\epsilon(k_i(t),P_i(t))d_i(t)\mid \textbf{S}(t-1)\right] \leq  \sum_{k,P}d_i(k)\epsilon_{i}(k,P)\mu_i(k,P),\nonumber \\
    & \sum_{i=1}^M \beta_i \mathbb{E}\left[\Delta_i(t)|\textbf{S}(t-1)\right] \leq  \sum_{i=1}^M \beta_i \mathbb{E}\left[{\Delta}^\omega_i(t)\right]. 
    \label{dpp_srp_relation}
\end{align} 
We also have, 
\begin{align}\label{eq:expectedAoI}
    & \mathbb{E}\left[{\Delta}^\omega_i(t)\right]=   \Delta_i(t-1)+1-\Delta_i(t-1)\sum_{k,P}\mu_i(k,P) \epsilon_i(k,P).
\end{align}

From \eqref{dpp_srp_relation}, \eqref{eq:expectedAoI} and \eqref{upb_q1_q2}, we get, 
\begin{align}
    &{\theta}(\textbf{S}(t-1))+  \sum_{i=1}^M \beta_i \mathbb{E}\left[{\Delta}_i(t)|\textbf{S}(t-1)\right]\nonumber \\
   &\leq M(B_1+B_2) \nonumber - V_1Q_{1}(t-1)\left(\bar{P}-\sum_{i=1}^M \sum_{k,P}P\mu_i(k,P) \right) \nonumber\\
& -V_2\sum_{i=1}^M Q_{i,2}(t-1)\left(\bar{d}_i- \sum_{k,P}d_i(k)\epsilon_{i}(k,P)\mu_i(k,P)\right)\nonumber \\
&+   \sum_{i=1}^M \beta_i \left(\Delta_i(t-1)+1-\Delta_i(t-1)\sum_{k,P}\mu_i(k,P) \epsilon_i(k,P)\right). \nonumber
\end{align}
Summing either  side from $t=1$ to $t=T$,  dividing by $TM$, rearranging the terms and taking expectation with respect to $\mathbf{S}(t-1)$, we get, $LHS_1+LHS_2 \leq RHS_1+RHS_2+RHS_3$, where
\begin{align}
    &LHS_1 = \frac{V_1}{TM}\left(\bar{P}-\sum_{i=1}^M \sum_{k,P}P\mu_i(k,P)\right)\sum_{t=1}^{T}\mathbb{E}[Q_{1}(t-1)] \nonumber \\ 
   &+\frac{V_2}{TM}\sum_{i=1}^M \left(\bar{d}_i-  \sum_{k,P}d_i(k)\epsilon_{i}(k,P)\mu_i(k,P)\right)\sum_{t=1}^{T} \mathbb{E}[Q_{i,2}(t-1)] \nonumber \\
    &LHS_2 = \frac{1}{2M}\sum_{i=1}^M \beta_i \sum_{k,P}\mu_i(k,P) \epsilon_i(k,P) \frac{1}{T}\sum_{t=1}^{T} \mathbb{E}\left[\Delta_i(t-1)\right]\nonumber \\
    &RHS_1 = (B_1+B_2)+\frac{1}{2M}\sum_{i=1}^M \beta_i \nonumber \\
    &RHS_2 = \frac{V_1}{2TM}  \mathbb{E}\left[Q_{1}^2(0)-Q_{1}^2(T)\right]+\frac{V_2}{2TM} \sum_{i=1}^M \mathbb{E}\left[Q_{i,2}^2(0)-Q_{i,2}^2(T)\right]\nonumber \\
    &RHS_3 = \frac{1}{2TM}\sum_{i=1}^M \beta_i \mathbb{E}\left[\Delta_i(0)-\Delta_i(T)\right].
    \nonumber
\end{align}
 Recall that  $\Delta_i(0)=n_i/N$, $Q_{1}(0)=\textbf{Q}_{2}(0)=0$. 
Hence,   we have $RHS_2+RHS_3 \leq 0 \nonumber$.
Noting that $LHS_2 \geq 0$, we have, $LHS_1 \leq RHS_1$.  Since both the terms in $LHS_1$ are positive and $RHS_1 < \infty$, we can conclude that each queue is strongly stable i.e., \eqref{queue-stability} holds true. 
\end{proof}

We now   prove the second part of Theorem~\ref{thm:DPP}. 
\begin{proof}
Consider the inequality $LHS_1 + LHS_2 \leq RHS_1+RHS_2+RHS_3$. Since $RHS_2+RHS_3\leq 0$ and $LHS_1\geq 0$, we have $LHS_2\leq RHS_1$. That is, 
\begin{align}
    &\frac{1}{2M}\sum_{i=1}^M \beta_i\sum_{k,P}\mu_i(k,P) \epsilon_i(k,P) \lim_{T\rightarrow \infty}\frac{1}{T}\sum_{t=1}^{T} \left(\mathbb{E}\left[\Delta_i(t-1)\right]\right)\nonumber \\
    &\leq (B_1+B_2)+\frac{1}{2M}\sum_{i=1}^M \beta_i. \label{bound_inequality}
\end{align}
Let $q_i^L(k, P)$ be the average number of successful packets using $k$ as the messageword length and $P$ as the transmit power, for the policy that solves the lower bound problem in \eqref{eq:lower_bound_opt_problem}. Clearly, the average number of successful packets, $q_i^L = \sum_{k,P}q_i^L(k,P)$. 
Evaluating the expression for lower bound we get
\begin{align}
    L^{\mathsf{D}, \mathsf{C}}  = \frac{1}{2M} \left(\sum_{i=1}^M w_i\left( \frac{1}{q_i^L}+\frac{n_i}{N}\right)\right). \label{lower_bound_equation}
\end{align}
Assigning
$ \mu_i(k,P) = q_i^L(k,P)/\epsilon_i(k,P)$ and $\beta_i = 2w_i/q_i^L$, and noting that \newline
$\sum_{i=1}^M q_i^L(k,P)/\sum_{k,P}\epsilon_i(k,P) \leq 1$, from \eqref{bound_inequality}, we get
\begin{align}
    &A_{\rm DPP}^{\mathsf{D}, \mathsf{C}} \leq 2(B_1+B_2) + \frac{1}{M}\sum_{i=1}^M \frac{w_i}{q_i^L}. \label{dpp_bound_inequality}
\end{align}
Comparing \eqref{lower_bound_equation} and \eqref{dpp_bound_inequality} gives us the following expression
\begin{align}
    &A_{\rm DPP}^{\mathsf{D}, \mathsf{C}}\leq 2L^{\mathsf{D}, \mathsf{C}}+\left(2(B_1+B_2)-\frac{1}{M}\sum_{i=1}^M \frac{w_i n_i}{N}\right).
\end{align}
Therefore, we have,  \begin{align}
    \frac{A_{\rm DPP}^{\mathsf{D}, \mathsf{C}}}{A_{\rm opt}^{\mathsf{D}, \mathsf{C}}} \leq 2 + \frac{1}{L^{\mathsf{D}, \mathsf{C}}}\left(2(B_1+B_2)-\frac{1}{MN}\sum_{i=1}^M w_i n_i\right).
\end{align}
This follows due to the fact that $L^{\mathsf{D}, \mathsf{C}}\leq A_{\rm opt}^{\mathsf{D}, \mathsf{C}}.$
\end{proof}

\balance

\bibliographystyle{ieeetr}
\bibliography{references}

\end{document}